\newcommand{\Ab}{\mathbf{A}}
\newcommand{\Eb}{\mathbf E}
\DeclareMathOperator{\curl}{curl}\DeclareMathOperator{\Div}{div}
 \DeclareMathOperator{\dist}{dist} 
\DeclareMathOperator{\supp}{supp} \DeclareMathOperator{\dom}{\mathrm {Dom}}
\DeclareMathOperator{\re}{\mathrm{Re}}
\DeclareMathOperator{\spc}{\mathrm{sp}}
\newtheorem{thm}{Theorem}[section]
\newtheorem{theorem}[thm]{Theorem}
\newtheorem{assumption}[thm]{Assumption}
\newtheorem{notation}[thm]{Notation}
\newtheorem{lemma}[thm]{Lemma}
\newtheorem{proposition}[thm]{Proposition}
\theoremstyle{remark}
\newtheorem{rem}[thm]{Remark}
\newtheorem*{remark}{Remark}
\newcommand{\nb}{\nabla}
\newcommand{\R}{\mathbb{R}}
\newcommand{\Fb}{\mathbf{F}}
\newcommand{\N}{\mathbb{N}}
\newcommand{\C}{\mathbb{C}}
\newcommand{\Hd}{H_{{\rm div}}^1(\Omega)}
\newcommand{\Om}{\Omega}
\newcommand{\kp}{\kappa}
\newcommand{\kn}{\nabla-i\kappa H{\bf A}}
\newcommand{\Es}{{\rm E}_{\rm g.st}(\kappa,  H)}
\newcommand{\GL}{\mathcal E_{\kappa,H}}
\def\sig#1{\vbox{\hsize=5.5cm
		\kern2cm\hrule\kern1ex
		\hbox to \hsize{\strut\hfil #1 \hfil}}}
\newcommand\signatures[4]{%
	\vspace{3cm}
	\hbox to \hsize{\hfil #1, \today\hfil}
	\vspace{3cm}
	\hbox to \hsize{\quad#2\hfil\hfil #3\quad}
	\vspace{3cm}
	\hbox to \hsize{\hfil#4\hfil}}
\numberwithin{equation}{section}
\title{Magnetic steps on the threshold of the normal state}
\author[W. Assaad]{Wafaa Assaad}
\address{Lund University, Department of Mathematics, Lund, Sweden}
\begin{document}
	\maketitle
	\begin{abstract}
		Superconductivity in the presence of a step magnetic field has been recently the focus of many works. This contribution examines the behavior of a two-dimensional superconducting domain, when superconductivity is lost in the whole domain except near the intersection points of the discontinuity edge and the boundary. The problem involves its own effective energy.  We provide local estimates of the minimizers in neighbourhoods of the intersection points. Consequently, we introduce new critical fields marking the loss of superconductivity in the vicinity of these points. The study is modelled by the Ginzburg--Landau theory, and large Ginzburg--Landau parameters are considered.
		
		~
		
		\textbf{Keywords:} Ginzburg--Landau functional, magnetic Schr\"odinger operators, superconductivity, step magnetic fields.\medskip
		
		\textbf{MSC subject classification:}  35Q56, 35J10, 35P15.
	\end{abstract}

	\section{Introduction}\label{sec:int}

Hundreds of contributions have investigated the response of a type-II superconductor with a large Ginzburg--Landau (GL) parameter to applied magnetic fields (see the two monographs~\cite{sandier2007vortices,fournais2010spectral} in the mathematical literature). In many generic situations in two- and three-dimensional domains submitted to smooth magnetic fields, it is shown that superconductivity eventually breaks down under an increasing magnetic field~\cite{saint1963onset,giorgi2002breakdown,lu2000gauge,helffer2001magnetic,helffer2003upper,fournais2010spectral}; the superconductor is said to pass to the normal state. In these situations, the transition to the normal state occurs at a unique value of the applied field's intensity\footnote{We say that the transition is monotone. For counterexamples of such a monotonicity, see~e.g.~\cite{little1962observation,erdHos1997dia,fournais2015lack,helffer2019thin,kachmar2019counterexample,kachmar2019superconductivity}.}, called the third critical field and depending on the GL parameter. The last phase preceding such a transition has been extensively studied for two-dimensional domains with piecewise smooth boundary (having possibly a finite number of corners) submitted to \emph{smooth} magnetic fields (see e.g.~\cite{jadallah2001onset,pan2002upper,pan2002schrodinger,bonnaillie2005fundamental,bonnaillie2006asymptotics,bonnaillie2007superconductivity,fournais2010spectral,helffer2018density,correggi2019effects}).

Here, we study the aforementioned phase in the case of a certain \emph{discontinuous}  magnetic field. In this situation, we provide precise information about the superconductivity localization, right before its breakdown. Such a localization was suggested but not proven in the recent paper~\cite{Assaad3}.
\subsection{The functional}
The problem is modelled by the GL theory. We consider a cross section,  $\Om\subset\R^2$, of an infinite cylindrical wire subjected to a magnetic field, whose direction is  parallel to the axis of the cylinder and whose profile (the scalar magnetic field) is the function $B_0\in L^2(\Om;[-1,1])$. $\Om$ is assumed to be open,  bounded with a smooth boundary, and simply connected. The GL free energy is given by the functional:
\begin{equation}\label{eq:GL}
\GL (\psi,\Ab)= \int_\Om \Big( \big|(\nb-i\kp H {\mathbf
	A})\psi\big|^2-\kp^2|\psi|^2+\frac{\kappa^2}{2}|\psi|^4 \Big)\,dx
+\kp^2H^2\int_{\Om}\big|\curl\Ab-B_0\big|^2\,dx,
\end{equation}
with $\psi \in H^1(\Om;\C)$ and  $\Ab\in H^1(\Om;\R^2)$. 
$\psi$ is the order parameter with $|\psi|^2$ being a measure of the Cooper pair electrons density, and
$\Ab$ is the vector potential whose $\curl$ represents the induced magnetic field in the sample. $\kp>0$ is a characteristic scale of the sample called  the GL parameter, assumed to be large ($\kappa\to+\infty$), which corresponds to extreme type-II superconductors in physics. Finally, $H>0$ is the intensity of the applied magnetic field. 

The functional in~\eqref{eq:GL} admits a gauge invariance property\footnote{The physically relevant quantities $|\psi|^2$, $\curl \Ab$ and $|(\nabla-i\kappa H\Ab)\psi|^2$ are invariant
	under the  transformation $(\psi,\Ab)\mapsto (e^{i\varphi\kappa H}\psi,\Ab+\nb \varphi)$ for any  $\varphi \in H^2(\Om;\R)$.}. 	Hence, one may restrict the energy minimization with respect to $(\psi,\Ab)$ (originally done in the space $H^1(\Om;\C)\times H^1(\Om;\R^2)$) to the space $H^1(\Om;\C)\times\Hd$, where 
\begin{equation*}\label{eq:Hd}
\Hd=
\left\{
\Ab \in H^1(\Om;\R^2)~:~ \Div\Ab=0 \ \mathrm{in}\ \Om,\ \Ab\cdot\nu=0\ \mathrm{on}\ \partial \Om
\right\}
\end{equation*}
and $\nu$ is a unit normal vector of $\partial \Omega$.
We define the following ground-state energy
\begin{equation} \label{eq:gr_st}
\Es=\inf\{\GL(\psi,\Ab)~:~(\psi,\Ab) \in H^1(\Om;\C)\times\Hd\}.
\end{equation}

Critical points $(\psi, \Ab) \in H^1(\Om;\C)\times\Hd $ of $\GL$ are weak solutions of the following GL equations:
\begin{equation}\label{eq:Euler}
\begin{cases}
\big(\kn\big)^2\psi=\kp^2(|\psi|^2-1)\psi &\mathrm{in}\ \Om,\\
-\nb^{\perp}  \big(\curl\Ab-B_0\big)= \frac{1}{\kp H}\mathrm{Im}\big(\overline{\psi}(\nb-i\kp H \Ab)\psi\big) & \mathrm{in}\ \Om,\\
\nu\cdot(\kn)\psi=0 & \mathrm{on}\ \partial \Om,\\
\curl\Ab=B_0 & \mathrm{on}~ \partial \Om,
\end{cases}
\end{equation}
with $\nb^\perp= (\partial_{x_2},-\partial_{x_1})$.

\subsection{Literature summary and new contribution}
The scenarios occurring in the case of smooth applied magnetic fields are well-known in the literature (see e.g.~\cite{lu1999estimates,helffer2001magnetic,pan2002schrodinger,sandier2007vortices,raymond2009sharp,fournais2010spectral,fournais2011nucleation,Helffer,attar2015energy,attar2015ground,correggi2016boundary,correggi2016effects,correggi2017surface,dauge2018semiclassical,fournais2019concentration}). 

In particular, in domains with smooth boundaries submitted to uniform magnetic fields, three subsequent transitions are observed while increasing the intensity of the applied field, and correspondingly three critical values of the field's intensity---critical fields---are considered:
\begin{itemize}
	\item The first critical field $H^{\rm unif}_{C_1}(\kappa)=\mathcal O(\kappa^{-1}\log(\kappa))$: indicating the transition between the perfect superconductivity state, where the whole sample is superconducting, and the state of vortices nucleation.
	\item The second critical field $H^{\rm unif}_{C_2}(\kappa)=\kappa$: indicating the transition between the bulk and surface superconductivity states. In the surface state, superconductivity is exclusively and~\emph{uniformly} distributed along the boundary.
	\item The third critical field $H^{\rm unif}_{C_3}(\kappa)=\mathcal O(\Theta_0^{-1}\kappa)$, where $\Theta_0\in(0,1)$ is introduced later: indicating the transition between the surface and normal states.
\end{itemize}

This picture is modified if the boundary of the sample admits singularities, like corners. In such situations, when the  magnetic field is uniform, the bulk phase is not essentially altered, but the surface and normal phases prove to be affected by the presence of corners~\cite{jadallah2001onset,pan2002upper,bonnaillie2007superconductivity,correggi2017surface,helffer2018density,correggi2019effects}. In particular, under a certain spectral/geometric assumption (see~\cite[Remark~1.1 \& Assumption~1.3]{bonnaillie2007superconductivity}), the third critical field experiences a shift up in the presence of corners ($H^{\rm cor}_{C_3}(\kappa)>H^{\rm unif}_{C_3}(\kappa)$), although is still of order $\kappa$. In addition, a new sub-phase appears in the surface superconductivity phase, when $H$ exceeds a certain critical value $H_{\rm int}^{\rm cor}(\kappa)$. There, superconductivity is no longer uniformly distributed along the boundary, but solely localized at the corners. Moreover, superconductivity disappears at a corner, $\mathsf s_j$, once the field's intensity exceeds certain critical field $H^{\rm cor}_{j}(\kappa)\in \big(H_{\rm int}^{\rm cor}(\kappa),H^{\rm cor}_{C_3}(\kappa)\big)$, depending on the opening angle of this corner (see e.g.~\cite{bonnaillie2007superconductivity, helffer2018density}).

Discontinuous magnetic fields were treated for the first time in the context of the non-linear GL functional in~\eqref{eq:GL},  in the recent works~\cite{Assaad, Assaad2019,Assaad3}. These works considered a magnetic field $B_0$ which is a step function, satisfying the following assumptions (see~Figure~\ref{fig1}):
\begin{assumption}\label{assump1}~
	\begin{enumerate}
		\item $\Omega_1$ and $\Omega_2$ are two disjoint open sets.
		\item $\Omega_1$ and $\Omega_2$ have a finite number of connected components.
		\item $\partial\Omega_1$ and $\partial\Omega_2$ are piecewise smooth with  a finite number of corners.
		\item $\Gamma=\partial\Omega_1\cap\partial\Omega_2$ is the union of a finite number of disjoint simple smooth curves $\{\Gamma_k\}_{k \in \mathcal K}$\,; we will refer to  $\Gamma$ as the \emph{magnetic edge}.
		\item $\Omega=(\Omega_1\cup\Omega_2\cup\Gamma)^\circ$ and  $\partial\Omega$  is smooth.
		\item For any $k \in \mathcal K$, $\Gamma_k$ intersects $\partial \Om$ at two distinct points. This intersection is transversal, i.e. $\mathrm{T}_{\partial \Omega} \times \mathrm{T}_{\Gamma_k} \neq 0$ at the intersection point, where $\mathrm{T}_{\partial \Omega}$ and  $\mathrm{T}_{\Gamma_k}$ are respectively unit tangent vectors of $\partial \Omega$ and $\Gamma_k$.
		\item $B_0={\mathbbm 1}_{\Omega_1}+a{\mathbbm 1}_{\Omega_2}$, where $a \in [-1,1)\setminus\{0\}$ is a given constant.
	\end{enumerate} 
\end{assumption}
 We borrow the following notation from~\cite{Assaad3}:
 \begin{notation}\label{not:alfa}
 	Since $\Gamma \cap \partial \Om$ is finite, we denote by
 	\[\Gamma \cap \partial \Om=\big\{ \mathsf p_j~:~j \in\{1,\cdots,n\}\big\},\]
 	where $n =\mathrm{Card}(\Gamma \cap \partial \Om)$. For all $j \in \{1,\cdots,n\}$, let   $\alpha_j \in(0,\pi)$ be the angle between $\Gamma$ and  $\partial \Om$ at the intersection point $\mathsf p_j$ (measured towards $\Om_1$). 
 \end{notation}
\begin{figure}[H]
	\begin{subfigure}{0.45\linewidth}
		\centering
		\includegraphics[scale=1]{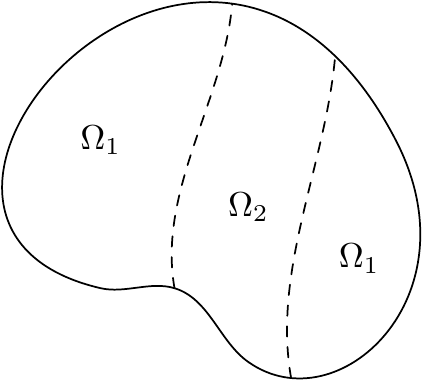}
	\end{subfigure}%
	\begin{subfigure}{0.45\linewidth}
		\centering
		\includegraphics[scale=1]{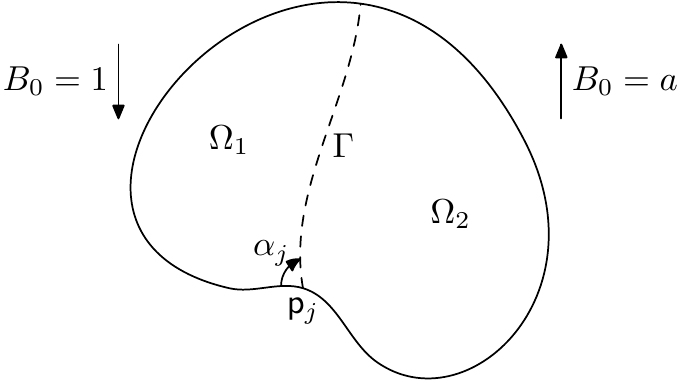}
	\end{subfigure}%
	\caption{Schematic representation of the set $\Om$ subjected to the step magnetic field $B_0$, with the magnetic edge $\Gamma$.} 
	\label{fig1}
\end{figure}
The study of such a discontinuous field case has been mathematically and physically motivated in~\cite{Assaad, Assaad2019,Assaad3}, and more recently in~\cite{assaad2020p4}. 
The latter contribution established the well-posedness of such problems with  step magnetic fields, in the sense that the study outcomes are not essentially affected by a small perturbation of the  field. More precisely, it showed that the ground-state energy in~\eqref{eq:gr_st} with the step magnetic field is the limit of ground-state energies with smooth fields. This interestingly constructed  a bridge between the non-linear GL problems with  discontinuous fields, tackled lately, and those with smooth fields, broadly studied in the literature as mentioned earlier.

\cite{Assaad, Assaad2019,Assaad3} examined superconductivity along an intensity interval extended from the bulk regime to the normal regime. Working under certain spectral conditions  (Assumption~\ref{assump3} below), these three papers introduced the following critical fields:
\begin{itemize}
	\item  $H^{\rm step}_{C_2}(\kappa)=|a|^{-1}\kappa$ ($a$ is the value in Assumption~\ref{assump1}): marking the passage from  bulk superconductivity to superconductivity being  partially/globally distributed along $\Gamma\cup\partial\Om$ (see~\cite[Section~1.5]{Assaad2019}). 
	\item $H_{\rm int}^{\rm step}(\kappa)=(|a|\Theta_0)^{-1}\kappa$: marking the disappearance of superconductivity in the whole sample away from $\Gamma\cap\partial \Om$.
	\item $H^{\rm step}_{C_3}(\kappa)=\mathcal O(\mu_*^{-1}\kappa)$, for a certain $\mu_*<|a|\Theta_0$ introduced later: marking the passage to the normal state.
\end{itemize}
In particular,~\cite{Assaad3} revealed a remarkable similarity between the role of  $\Gamma\cap\partial \Om$, and that of the corners in the foregoing corners situation. Indeed, $\Gamma\cap\partial \Om$ shifts the third critical field to a higher level, compared to that for smooth domains with uniform  fields  ($H^{\rm step}_{C_3}(\kappa)>H^{\rm unif}_{C_3}(\kappa)$). Moreover,  
 a spectral value, $\mu(\alpha_j,a)$, was assigned to each intersection point $\mathsf p_j$ of $\Gamma$ and $\partial \Om$. Then under the condition $\mu(\alpha_j,a)<|a|\Theta_0$ (discussed later),~\cite{Assaad3}  introduced additional fields, $H^{\rm step}_{j}(\kappa):=\big(\mu(\alpha_j,a)\big)^{-1}\kappa$, satisfying $H^{\rm step}_{\rm int}(\kappa)<H^{\rm step}_{j}(\kappa)<H^{\rm step}_{C_3}(\kappa)$. When $H>H^{\rm step}_{j}(\kappa)$,~\cite[Theorem~1.6]{Assaad3}  showed the \emph{non-existence} of superconductivity near the point $\mathsf p_j$. This was reminiscent of the aforementioned corners performance (see~\cite[Section~1.3]{Assaad3} for a more detailed comparison). However, this theorem  did not ensure the \emph{existence} of superconductivity  in the vicinity of $\mathsf p_j$, when $H$ is still below $H^{\rm step}_{j}(\kappa)$. If such an existence is proven, then $H^{\rm step}_{j}(\kappa)$ can be reintroduced as a \emph{critical field} denoting the ending of the superconducting state near $\mathsf p_j$. This will emphasize the  similarity with the corners situation in~\cite{bonnaillie2007superconductivity,helffer2018density}.

\emph{Working under the assumptions in~\cite{Assaad3}}, the current contribution establishes this existence, hence sharpens the results in~\cite[Theorem~1.6]{Assaad3}. More precisely, Theorem~\ref{thm:main} below  provides local estimates of the minimizers which describe the sample's behavior at the threshold of $H^{\rm step}_{j}(\kappa)$, and particularly show the concentration of superconductivity near the point $\mathsf p_j$ when $H<H^{\rm step}_{j}(\kappa)$.

\subsection{Setting and main results}\label{sec:setting}
In what follows, we formally present the setting and the main results of this paper.
In addition to Assumption~\ref{assump1}, we assume the following: 
\begin{assumption}\label{A_2}
	The intensity $H$ depends on $\kappa$ in the following manner
	\begin{equation*}\label{eq:A_2}
	H=b\kappa,
	\end{equation*}
	where $b$ is a  real parameter independent of $H$ and $\kappa$, satisfying 
	\[b > \frac 1{|a|\Theta_0}.\]
\end{assumption} 
\begin{assumption}\label{assump3}~
	 For $j\in\{1,\cdots,n\}$, let $\alpha_j$ be the angle in Notation~\ref{not:alfa}. We assume that $\mu(\alpha_j,a)<|a|\Theta_0$.
\end{assumption}
The values $\Theta_0$ and $\mu(\alpha_j,a)$ appearing in the assumptions above stand for the following quantities:
\begin{itemize}
	\item $\Theta_0\approx 0.59$ is the so-called de Gennes constant---the ground-state energy of the Neumann realization of the Schr\"odinger operator, $-(\nabla+\frac 12 ix^\perp)^2$, with a unit magnetic field  in $\R^2_+$ (see~e.g.~\cite{bonnaillie2007superconductivity}).
	\item $\mu(\alpha,a)$ is the ground-state energy of the Neumann realization of a Schr\"odinger operator with a step magnetic field in $\R^2_+$, introduced in Section~\ref{sec:new_model}. 
\end{itemize}
	\begin{remark}
	The conditions in Assumption~\ref{assump3} are discussed later in the paper (see~Remark~\ref{rem:v_0} and Footnote~\ref{foot}).~\cite[Section~3.3]{Assaad3} provided examples of couples $(\alpha_j,a)$ satisfying this assumption, living in a neighbourhood of $(\pi/2,-1)$.
\end{remark}
The statements of our main results involve the function
\[ E_{\alpha,a}:\big[(|a|\Theta_0)^{-1},+\infty\big) \rightarrow (-\infty,0]\]
defined in Section~\ref{sec:eff}. $E_{\alpha,a}$ is a new effective energy constructed in this paper to deal with our problem near the intersection of the magnetic edge with the boundary. We highlight the following crucial property of this energy:
 \begin{equation}\label{eq:cond2}
 E_{\alpha,a}(b)<0,\ \mbox{for}\ \frac 1{|a|\Theta_0}< b<\frac 1{\mu(\alpha,a)},\ \mbox{and}\
 E_{\alpha,a}(b)=0,\ \mbox{for}\ b\geq\frac 1{\mu(\alpha,a)}.
 \end{equation} 

Let $j\in\{1,\cdots,n\}$ and $\ell\in(0,1)$. For each $\mathsf p_j\in\Gamma\cap \partial \Om$,  we define the set
\begin{equation}\label{eq:Nj}
\mathcal N_j(\ell)=\{x\in \Om,\ \dist(x,\mathsf p_j)\leq\ell\}
\end{equation} 
 We also introduce the set
\begin{equation*}\label{eq:S}
T=\{j\in\{1,\cdots,n\}~:~\mu(\alpha_j,a)<b^{-1}\} 
\end{equation*}
\begin{theorem}\label{thm:main}
	 Let $\rho\in(4/5,1)$. There exists $\kappa_0>1$ and  a function $\mathfrak r:(\kappa_0,+\infty)\to (0,+\infty)$ such that $\lim_{\kappa\to +\infty}\mathfrak r(\kappa)=0$ and the following is true. If $\kappa\geq \kappa_0$ and $(\psi,\Ab) \in H^1(\Om;\C)\times\Hd$ is a minimizer of~\eqref{eq:GL}, then for $\ell\approx\kappa^{-\rho}$ and $j\in T$ we have
		\begin{equation}\label{eq:m2}
	\Big|\kappa^2\int_{\mathcal N_j(\ell)}|\psi|^4\,dx+2 E_{\alpha_j,a}(b)\Big|\leq\mathfrak r(\kappa).
	\end{equation}
	Consequently,
	\begin{equation}\label{eq:m3}
	\Es=\sum_{j\in T} E_{\alpha_j,a}(b)+o(1).
	\end{equation}
\end{theorem}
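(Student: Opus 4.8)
The plan is to establish \eqref{eq:m3} first, by matching an upper and a lower bound on $\Es$, and then to deduce the localized estimate \eqref{eq:m2} by combining \eqref{eq:m3} with the virial identity satisfied by critical points and with the Agmon-type decay of $\psi$ away from $\Gamma\cap\partial\Om$. For the \textbf{upper bound}, for each $j\in T$ I would pick a near-minimizer of the effective model defining $E_{\alpha_j,a}(b)$ on the model wedge (a half-plane carrying a straight magnetic edge meeting the boundary at angle $\alpha_j$, with field $\mathbbm 1$ on one side and $a$ on the other), cut it off at scale $\kp^{-\rho}$, transplant it to a neighbourhood $\mathcal N_j(\ell)$ of $\mathsf p_j$ through a local diffeomorphism flattening both $\partial\Om$ and $\Gamma$, and assemble the pieces (which have pairwise disjoint supports) into a global trial pair $(\psi_{\rm tr},\Ab_0)$, where $\Ab_0\in\Hd$ generates $B_0$. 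Evaluating $\GL$ on $(\psi_{\rm tr},\Ab_0)$, the induced-field term vanishes identically, the contributions from the curvature of $\partial\Om$ and $\Gamma$, from the change of metric, and from the truncation are $o(1)$ for $\rho<1$, and each transplanted piece contributes $E_{\alpha_j,a}(b)+o(1)$; summing gives $\Es\le\sum_{j\in T}E_{\alpha_j,a}(b)+o(1)$.

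For the \textbf{lower bound}, let $(\psi,\Ab)$ be a minimizer. I would first import from \cite{Assaad3} the a priori facts $\|\psi\|_\infty\le 1$, the smallness $\kp^2H^2\int_\Om|\curl\Ab-B_0|^2=o(1)$ of the induced-field energy, and—crucially, since $b>(|a|\Theta_0)^{-1}$—the exponential localization of $\psi$ near $\Gamma\cap\partial\Om$, which forces $\int_{\Om\setminus\bigcup_j\mathcal N_j(\ell)}\big(|\psi|^2+\kp^{-2}|(\kn)\psi|^2\big)$ to be smaller than any power of $\kp$. Next I would insert an IMS partition of unity $\{\chi_0\}\cup\{\chi_j\}_j$ subordinate to the complement of $\bigcup_j\mathcal N_j(\ell)$ and to the sets $\mathcal N_j(\ell)$, producing the localization error $\sum_j\int_\Om|\nabla\chi_j|^2|\psi|^2\lesssim \ell^{-2}\!\int_{\text{annuli}}|\psi|^2=o(1)$ by the decay bound. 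On $\supp\chi_0$ the localized GL energy is $\ge -o(1)$—indeed essentially nonnegative—by the spectral lower bound underlying the normal-state regime $b>(|a|\Theta_0)^{-1}$. Near each $\mathsf p_j$ I would blow up at scale $\kp^{-1}$, flatten $\partial\Om$ and $\Gamma$ and freeze $B_0$ to the model step field; the localized functional then converges to the model functional, so the localized energy is $\ge E_{\alpha_j,a}(b)-o(1)$ (using $E_{\alpha_j,a}(b)\le 0$ everywhere, and $E_{\alpha_j,a}(b)=0$ when $j\notin T$). Summing over the partition yields $\GL(\psi,\Ab)\ge\sum_{j\in T}E_{\alpha_j,a}(b)-o(1)$, which with the upper bound proves \eqref{eq:m3}.

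To pass \textbf{from \eqref{eq:m3} to \eqref{eq:m2}}, I would test the first equation of \eqref{eq:Euler} against $\overline\psi$ and integrate by parts using $\nu\cdot(\kn)\psi=0$, obtaining $\GL(\psi,\Ab)=-\tfrac{\kp^2}{2}\int_\Om|\psi|^4+\kp^2H^2\int_\Om|\curl\Ab-B_0|^2$, hence $\kp^2\int_\Om|\psi|^4=-2\,\Es+o(1)=-2\sum_{j\in T}E_{\alpha_j,a}(b)+o(1)$ by the field bound and \eqref{eq:m3}. The decay estimate kills $\kp^2\int_{\Om\setminus\bigcup_j\mathcal N_j(\ell)}|\psi|^4$ and, for $j\notin T$, $\kp^2\int_{\mathcal N_j(\ell)}|\psi|^4$ (this last point is essentially \cite[Theorem~1.6]{Assaad3}). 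Finally, because the matching of total energies leaves no room for a local deficit, each localized energy must equal $E_{\alpha_j,a}(b)+o(1)$; applying the same virial identity to the localized functional (so that the localized energy equals $-\tfrac{\kp^2}{2}\int|\chi_j\psi|^4$ up to $o(1)$) upgrades the global identity to the per-point statement \eqref{eq:m2}.

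The \textbf{main obstacle} is the blow-up analysis at the points $\mathsf p_j$, where $\partial\Om$ and $\Gamma$ cross and the limiting object is a genuinely two-scale, corner-type model: one must identify the localized GL energy with $E_{\alpha_j,a}(b)$ up to $o(1)$ while choosing $\ell\approx\kp^{-\rho}$ small enough that the curvature of $\partial\Om$ and $\Gamma$ and the deviation of $B_0$ from the model step field are negligible, yet large enough—and exploiting the fast decay of $\psi$ on $\partial\mathcal N_j(\ell)$—that the localization error is $o(1)$. Once one accounts for the rate at which $E_{\alpha,a}(b)$ is approached by states supported on bounded truncations of the model wedge, the admissible window for this trade-off is exactly $\rho\in(4/5,1)$.
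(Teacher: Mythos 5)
Your proposal is correct and relies on the same toolbox as the paper: the half-plane effective functional and its ground-state energy $E_{\alpha,a}(b)$, transplantation of model minimizers through the flattening diffeomorphism and the gauge change of Lemma~\ref{lem:gauge1}, the Agmon decay of $\psi$ away from $\Gamma\cap\partial\Om$ (Theorem~\ref{thm:decay}), IMS localization, and the integration-by-parts (virial) identity turning energies into $L^4$-norms. The one genuine organizational difference is the direction of the deduction. You prove the global asymptotics \eqref{eq:m3} first (a global trial state for the upper bound, blow-up lower bounds for each localized piece) and then recover the per-point statement \eqref{eq:m2} by a ``no room for a local deficit'' argument combined with the localized virial identity. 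The paper instead bounds each local energy $\mathcal E_0(\psi,\Ab;\mathcal N_j(\hat\ell))$ from both sides directly: the lower bound is Proposition~\ref{prop:E0low}, and the upper bound comes from a local replacement trial state $w$ that coincides with $\psi$ outside $\mathcal N_j(\hat\ell)$ and with a transplanted model minimizer inside, so that minimality immediately gives $\mathcal E_0(\psi,\Ab;\mathcal N_j(\hat\ell))\leq\mathcal E_0(w,\Ab;\mathcal N_j(\hat\ell))$; \eqref{eq:m2} then follows point by point and \eqref{eq:m3} by summation. Both routes are sound: the local-replacement trick spares the bookkeeping of a full partition of unity and of the region away from the $\mathsf p_j$'s, while your global-first route makes the upper-bound construction marginally simpler (the induced-field term vanishes exactly for your trial pair). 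One small correction: the admissible window $\rho\in(4/5,1)$ is dictated in the paper by the approximation errors of the vector potential and of the metric (the remainder terms in Proposition~\ref{prop:E0low}), not by the rate at which truncated states on the model wedge approach $E_{\alpha,a}(b)$ --- that truncation error is exponentially small by the decay \eqref{eq:u-alfa1}.
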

	The results in Theorem~\ref{thm:main} are actually valid for any $j\in\{1,\cdots,n\}$ (not only for $j\in T$). However in light of the properties of the energies $E_{\alpha_j,a}$ in~\eqref{eq:cond2}, for a \emph{fixed} intensity $H=b\kappa$ (a fixed $b$), superconductivity is negligible at the points $\{\mathsf p_j\}_{j\notin T}$. Hence, the only intersection points which contribute to the global energy, $\Es$, are $\{\mathsf p_j\}_{j\in T}$.

On the other hand, when we increase the intensity $H$ (increasing $b$), we observe successive breakdowns of superconductivity at the points $\mathsf p_j$: Labelling the points $\mathsf p_j$, $j\in\{1,\cdots,n\}$, so that
\[\mu(\alpha_1,a)\geq \mu(\alpha_2,a)\geq\cdots\geq\mu(\alpha_n,a),\]
the fields $H^{\rm step}_{j}(\kappa):=\big(\mu(\alpha_j,a)\big)^{-1}\kappa$, introduced above, satisfy 
\begin{equation}\label{eq:field}
H^{\rm step}_{C_2}(\kappa)< H^{\rm step}_{\rm int}(\kappa)<H^{\rm step}_{1}(\kappa)\leq H^{\rm step}_{2}(\kappa)\leq \cdots\leq H^{\rm step}_{n}(\kappa).\end{equation}
By~\cite{Assaad3}, $H^{\rm step}_{n}(\kappa)$ is the leading-order term of the third critical field\footnote{The value $\mu_*$ mentioned earlier is then $\mu(\alpha_n,a)=\min_j\mu(\alpha_j,a)$.} $H^{\rm step}_{C_3}(\kappa)$.
Thanks to Theorem~\ref{thm:main}, we now consider each  field $H^{\rm step}_{j}(\kappa)$ as a new critical field at which  superconductivity disappears near the intersection point $\mathsf p_j$. 

Consequently,  this paper adds interesting features to the results  in~\cite{Assaad, Assaad2019,Assaad3} that we summarize in Diagram~\ref{fig:diag}.
\begin{figure}[H]
	\centering
	\includegraphics{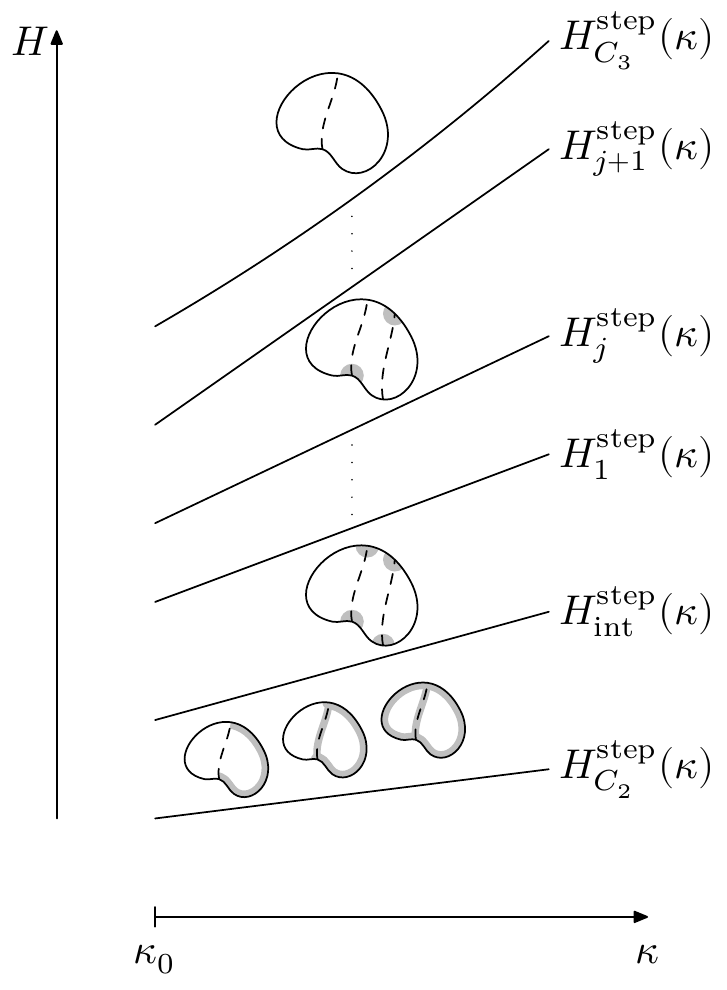}
	\caption{ Schematic phase-diagram representing the distribution of superconductivity in the sample, according to the intensity $H$ of the applied magnetic field and for large values of $\kappa$. Only the grey regions carry superconductivity. The critical lines plotted in the $(\kappa,H)$-plane represent the critical fields in~\eqref{eq:field}. When (and if) $H^{\rm step}_{j}(\kappa)<H<H^{\rm step}_{j+1}(\kappa)$, superconductivity only persists near the points $\mathsf p_k$, for $k\geq j+1$.}
	\label{fig:diag}
\end{figure}
\subsection{Perspectives}
The paper shows that superconductivity is mainly dictated by the intersection of $\Gamma$ and $\partial\Om$, in the intensity regime $\big(H^{\rm step}_{\rm int}(\kappa),H^{\rm step}_{C_3}(\kappa)\big)$ preceding the transition to the normal state. 
Note that this intersection effects do not show up in the leading-order terms of the energy, in the  regime $\big(H^{\rm step}_{C_2}(\kappa),H^{\rm step}_{\rm int}(\kappa)\big)$ where superconductivity is concentrated near $\Gamma\cup\partial \Om$.
 In~\cite[Theorem~1.7]{Assaad2019}, the following estimates of the ground-state energy are established, for  $H\in \big(H^{\rm step}_{C_2}(\kappa),H^{\rm step}_{\rm int}(\kappa)\big)$:
\begin{equation}\label{eq:Eg}
\Es =E^\mathrm{L}_a(b)\kappa +o(\kappa) \qquad(\kappa\to+\infty)\,,
\end{equation}
where
 $b=H/\kappa$ and $E^\mathrm{L}_a(b)$ is a $1$D energy corresponding to the contribution of $\Gamma$ and $\partial\Om$ \emph{away} from their intersection.

However,  predicting a similarity with the corners situation in the \emph{surface} superconductivity regime, we expect the contribution of $\Gamma\cap \partial\Om$ to the remainder terms  in~\eqref{eq:Eg} as follows. \cite{correggi2017surface,correggi2019effects} (see also~\cite{Correggi,correggi2016boundary,correggi2017universal}) have examined the effects of corners on surface superconductivity, in a $2$D domain submitted to a uniform field. Inspired by these works, one may identify an effective model with a step magnetic field on the half-plane, revealing the foregoing contribution of $\Gamma\cap \partial\Om$. This model will be \emph{genuinely} $2$D, unlike the model generating the $1$D energy $E^\mathrm{L}_a(b)$. 

Analogously to~\cite[Theorem~2.1]{correggi2019effects}, we anticipate  the following improved expansion in~\eqref{eq:Eg}: 
\[E^\mathrm{L}_a(b)\kappa -E^1_{\rm corr}\int_0^{\partial \Om_1} k(s)\,ds-E^2_{\rm corr}\int_0^{\partial \Om_2} k(s)\,ds-E^3_{\rm corr}\int_0^{\Gamma} k(s)\,ds-\sum_{j=1}^N E_{{\rm int},\alpha_j,a}+o(1),\]
where $E^k_{\rm corr}$, $k=1,\cdots,3$, are correction energies corresponding to the curvature contribution along $\Gamma$ and $\partial \Om$, and $E_{{\rm int},\alpha_j,a}$, $j=1,\cdots,N$, are energies corresponding to the contributions of the intersection points $\mathsf p_j$,  and depending on the angles $\alpha_j$ and the values $a$ of the applied magnetic field.
\subsection{Notation}
\begin{itemize}
	\item[]	
	\item Let $a(\kp)$ and $b(\kp)$ be two positive functions. We write 
	$a(\kp) \approx b(\kp)$ if there exist constants $\kappa_0$, $C_1$ and $C_2$ such that for all $\kp\geq\kp_0$, $C_1 a(\kp)\leq b(\kp) \leq C_2 a(\kp).$ We write $a(\kp)=\mathcal O\big(b(\kp)\big)$ if $a(\kp)/b(\kp)\rightarrow 1$ as $\kp \rightarrow +\infty$.
	\item The quantity $o(1)$ indicates a function of $\kappa$ such that $|o(1)|\rightarrow 0$ as $\kp \rightarrow +\infty$. Any expression $o(1)$ is independent of the minimizer $(\psi,\Ab)$ of~\eqref{eq:GL}.
	\item  Let  $n \in \N$ and $N \in \N$, $\gamma \in (0,1)$.  We use the following
	H\"{o}lder space
	\[C^{n,\gamma}({\overline \Om})=\left\{f \in C^n({\overline \Om})\ | \sup_{x\neq y\in \Om}\frac {|D^nf(x)-D^nf(y)|}{|x-y|^\alpha}<+\infty\right\}.\]
\end{itemize}
\subsection{Organization of the paper} Sections~\ref{sec:new_model} and~\ref{sec:eff} present the effective energies and their properties in the half-plane. Section~\ref{sec:prel} gathers useful a-priori estimates and decay results. In Section~\ref{sec:thm}, a suitable change of variables is defined and the local estimates in Theorem~\ref{thm:main} are established. 
\section{Shr\"odinger operator with a step magnetic field in the half-plane}\label{sec:new_model}
In this section, we present a Schr\"odinger operator with a step magnetic field in $\R_+^2$, introduced in~\cite{Assaad3} and whose  spectral properties are involved in the study of the effective energy in Section~\ref{sec:eff}. The spectral analysis of this operator was first done in~\cite{Assaad3}.

Let $a \in[-1,1)\setminus\{0\}$ and  $\alpha \in(0,\pi)$. We introduce the following sets in polar coordinates
\begin{align}
D_\alpha^1&=\{r(\cos\theta,\sin\theta)\in \R^2~:~r\in(0,\infty),\ 0<\theta<\alpha\}, \nonumber\\
D_\alpha^2&=\{r(\cos\theta,\sin\theta)\in \R^2~:~r\in(0,\infty),\ \alpha<\theta<\pi\}, \label{eq:A_alfa}
\end{align}
and the Schr\"odinger operator in $\R^2_+$
\begin{equation}\label{eq:P_alfa}
\mathcal H_{\alpha,a}=-\left(\nabla-i\Ab_{\alpha,a} \right)^2,
\end{equation}
where  $\Ab_{\alpha,a}=\big(0,A_{\alpha,a}\big)$ is a magnetic potential\footnote{Though one may think of a simpler choice of the magnetic potential,~\cite{Assaad3} explains the reason of defining it as in~\eqref{eq:Aa1}--\eqref{eq:Aa3}; this definition proves to be useful in explicitly deriving certain gauge results (see~\ref{lem:gauge1} and~\cite[Proof of Lemma~4.3]{Assaad3}).} defined as follows: 
\begin{equation}\label{eq:Aa1}
\mathrm{For}\ \alpha \in(0,\pi/2),\quad A_{\alpha,a}(x_1,x_2)= 
\begin{cases}
x_1+\frac {a-1}{\tan \alpha}x_2,&\mathrm{if}~(x_1,x_2)\in D^1_\alpha,\\
ax_1,&\mathrm{if}~(x_1,x_2)\in D^2_\alpha,
\end{cases}
\end{equation}
\begin{equation}\label{eq:Aa2}
\mathrm{for}\ \alpha \in(\pi/2,\pi),\quad A_{\alpha,a}(x_1,x_2)= 
\begin{cases}
x_1,&\mathrm{if}~(x_1,x_2)\in D^1_\alpha,\\
ax_1+\frac {1-a}{\tan \alpha}x_2,&\mathrm{if}~(x_1,x_2)\in D^2_\alpha,
\end{cases}
\end{equation}
\begin{equation}\label{eq:Aa3}
\mathrm{and}\quad A_{\frac \pi 2,a}(x_1,x_2)= 
\begin{cases}
x_1,&\mathrm{if}~(x_1,x_2)\in D^1_{\pi/2},\\
ax_1,&\mathrm{if}~(x_1,x_2)\in D^2_{\pi/2}.
\end{cases}
\end{equation}
The potential $\Ab_{\alpha,a}$ is in $H^1(\R^2_+;\R^2)$ and satisfies $\curl\Ab_{\alpha,a}={\mathbbm 1}_{D_\alpha^1}+a{\mathbbm 1}_{D_\alpha^2}$.
The operator $\mathcal H_{\alpha,a}$ is defined over the domain
\begin{multline*}\label{eq:domH}
\dom \mathcal H_{\alpha,a}=\big\{u\in L^2(\R^2_+)~:~ (\nabla-i\Ab_{\alpha,a})^j u \in L^2(\R^2_+),\\ \mathrm{for}\ j\in \{1,2\},(\nabla-i\Ab_{\alpha,a})\cdot (0,1)|_{\partial (\R^2_+)}=0\big\}.
\end{multline*}
Let $\mu(\alpha,a)$ be the bottom of the spectrum of $\mathcal H_{\alpha,a}$, defined by the min-max principle as follows:
\begin{equation}\label{eq:mu_alfa}
\mu(\alpha,a)=\inf_{\substack{u\in\dom \mathcal H_{\alpha,a}\\ u \neq 0 }}\frac{\|(\nabla-i\Ab_{\alpha,a})u\|_{L^2(\R^2_+)}^2}{\|u\|^2_{L^2(\R^2_+)}},
\end{equation} 
\begin{rem}\label{rem:v_0}
In~\cite[Section~3]{Assaad3}, it is asserted that $\inf \spc_{ess}(\mathcal H_{\alpha,a})=|a|\Theta_0$. It follows that if  $\mu(\alpha,a)<|a|\Theta_0$, then $\mu(\alpha,a)$ is an eigenvalue of $\mathcal H_{\alpha,a}$. This condition is crucial in deriving important properties of the effective energy in Section~\ref{sec:eff} below.
\end{rem}
%
\section{Effective energy}\label{sec:eff}
Let $a \in[-1,1)\setminus\{0\}$, $\alpha\in(0,\pi)$ and $b>0$. We introduce the following energy
\begin{equation}\label{eq:Jb}
J_{b,\alpha,a}(u)=\int_{\R_+^2} \left(b\big|(\nb-i\Ab_{\alpha,a})u\big|^2-|u|^2+\frac 12 |u|^4\right)\,dx ,
\end{equation}
where $\Ab_{a,\alpha}$ is the vector potential in Section~\ref{sec:new_model}. This energy is defined over the space
\begin{equation*}\label{eq:DJ}
H^1_{\Ab_{\alpha,a}}=\left\{u \in L^2(\R_+^2)~:~(\nb-i\Ab_{\alpha,a})u \in L^2(\R_+^2) \right\}.\end{equation*}
In what follows, we assume that $b> (|a|\Theta_0)^{-1}$.
The functional in~\eqref{eq:Jb} is bounded from below. This can be seen by using the spectral properties of the Neumann realization of the Schr\"odinger operator $(\nabla-i\Ab_{\alpha,a})^2$ on $\R_+^2$, involving those of other operators with uniform/step magnetic fields on $\R^2$ or $\R_+^2$ (see \cite[Section~2]{Assaad3}). Indeed, consider  a function $u\in C_0^\infty(\R^2)$ and let $\Gamma_\alpha:=\partial D_\alpha^1\cap \partial D_\alpha^2$.
\begin{itemize}
	\item If $\supp u\Subset D_\alpha^1\cup D_\alpha^2$, then
	\begin{equation}\label{eq:s1}
	\int_{\R_+^2} \big|(\nb-i\Ab_{\alpha,a})u\big|^2\,dx\geq |a|\int_{\R_+^2}|u|^2\,dx>|a|\Theta_0\int_{\R_+^2}|u|^2\,dx,\end{equation}
	having $\Theta_0\in (0,1)$.
	\item If $\supp u\Subset \R^2\setminus\Gamma_\alpha$ and meets the line $y=0$, then
	\begin{equation}\label{eq:s2}\int_{\R_+^2} \big|(\nb-i\Ab_{\alpha,a})u\big|^2\,dx\geq |a|\Theta_0\int_{\R_+^2}|u|^2\,dx.\end{equation}
	\item If $\supp u\Subset \R_+^2$ and meets $\Gamma_\alpha$, then
	\begin{equation}\label{eq:s3}\int_{\R_+^2} \big|(\nb-i\Ab_{\alpha,a})u\big|^2\,dx\geq \beta_a\int_{\R_+^2}|u|^2\,dx\geq |a|\Theta_0\int_{\R_+^2}|u|^2\,dx,\end{equation}
	where $\beta_a$ is the bottom of the spectrum of an operator, $\mathcal L_a$, with a step magnetic field defined over $\R^2$ in~\cite[Section~2.2]{Assaad3}.
\end{itemize}
Consequently, the lower bound of $J_{b,\alpha,a}$ is derived as follows. Let $u\in H^1_{\Ab_{\alpha,a}}$ and $R>0$. One can find a partition of unity $(\chi_k)_{k\in \N}$ of $\R^2$, satisfying 
\begin{equation*}\label{eq:partition}
\sum_k|\chi_k|^2=1, \quad \sum_k|\nabla \chi_k|^2 \leq CR^{-2},\quad {\rm and}\ \supp \chi_k\subset B_k(R),
\end{equation*}
where $(B_k)_{k}$ are balls of radii $R$ such that $B_1(R)=B(O,R)$. 
 Using the IMS localization formula (see~\cite[Theorem~3.2]{cycon2009schrodinger}), we have
\begin{align}
J_{b,\alpha,a}(u)&\geq \sum_k J_{b,\alpha,a}(\chi_k u)-b\sum_k \|\nabla\chi_k u\|^2_{L^2(\R_+^2)}\nonumber \\
& \geq\sum_{k \neq 1} \Big( J_{b,\alpha,a}(\chi_k u)-\frac {Cb}{R^2} \|\chi_k u\|^2_{L^2(\R_+^2)}\Big)+J_{b,\alpha,a}(\chi_1 u)-\frac {Cb}{R^2} \|\chi_1u\|^2_{L^2(B_1(R))}\nonumber \\
&\geq (b|a|\Theta_0-1-\frac {Cb}{R^2} )\sum_{k\neq 1}\|\chi_k u\|^2_{L^2(\R_+^2)}+\frac 12\big \|\big(|\chi_1 u|^2-(1+\frac {Cb}{R^2})\big)\big\|^2_{L^2(B_1(R))}\nonumber \\
&\quad -\frac 12 (1+\frac {Cb}{R^2})^2|B_1(R)|,\qquad \mbox{(by~\eqref{eq:s1}--\eqref{eq:s3})} \nonumber \\
& \geq (b|a|\Theta_0-1-\frac {Cb}{R^2} )\sum_{k\neq 1}\|\chi_k u\|^2_{L^2(\R_+^2)}-\frac {CR^2}2 (1+\frac {Cb}{R^2})^2.\nonumber
\end{align}
Having $b> (|a|\Theta_0)^{-1}$, we choose  $R$ sufficiently large so that $b|a|\Theta_0-1-Cb/R^2 >0$. Hence, we get the desired lower bound.

We  define now the (finite) ground-state energy
\begin{equation}\label{eq:gs-E}
E_{\alpha,a}(b)=\displaystyle\inf_{u \in H^1_{\Ab_{\alpha,a}}}
J_{b,\alpha,a}(u). 
\end{equation} 
Working under the assumption 
\begin{equation}\label{ass_mu}
\mu(\alpha,a)<|a|\Theta_0,
\end{equation}
where $\mu(\alpha,a)$ is the value in~\eqref{eq:mu_alfa}, important properties of the functional in~\eqref{eq:Jb} are the following (cf.~\cite[Proposition~15.3.10]{fournais2010spectral} for the same argument). This functional is non-positive. It has non-trivial minimizers, $u_{b,\alpha,a}$, if and only if \footnote{Note that the condition in~\eqref{eq:cond} is only valid under~\eqref{ass_mu}.  In light of Remark~\ref{rem:v_0}, the eigenfunction corresponding to $\mu(\alpha,a)$ is used, in such situations, to construct a test function proving the non-triviality of the minimizers.
	If~\eqref{ass_mu} is not satisfied, then $\mu(\alpha,a)=|a|\Theta_0$. In this case, for $b>(|a|\Theta_0)^{-1}$, $J_{b,\alpha,a}$ will only have zero minimizers, and the study will be trivial.\label{foot}}
\begin{equation}\label{eq:cond}
\frac 1{|a|\Theta_0}< b<\frac 1{\mu(\alpha,a)}.\end{equation}
These minimizers satisfy 
\begin{equation*}\label{eq:u-alfa}
\|u_{b,\alpha,a}\|_{L^\infty(\R_+^2)}\leq1,
\end{equation*}
and
\begin{equation}\label{eq:u-alfa1}
\int_{\R_+^2}e^{2\delta|x|}\left(|u_{b,\alpha,a}(x)|^2+|(\nabla-i\Ab_{\alpha,a})u_{b,\alpha,a}(x)|^2\right)\,dx\leq C,
\end{equation}
where $\delta,C$ are two positive constants dependent of $b$, $\alpha$ and $a$. 

In light of the discussion above, we have
\begin{align}\label{eq:cond1}
E_{\alpha,a}(b)&<0,\qquad \mbox{for}\ \frac 1{|a|\Theta_0}< b<\frac 1{\mu(\alpha,a)}, \nonumber\\
E_{\alpha,a}(b)&=0,\qquad \mbox{for}\ b\geq\frac 1{\mu(\alpha,a)}.
\end{align}
\begin{remark}
	Using a symmetry argument in the particular case where $\alpha=\pi/2$ and $a=-1$, $E_{\pi/2,-1}(b)$ will be the same effective energy introduced in~\cite{bonnaillie2007superconductivity} (up to a scaling factor).
	\end{remark}
\section{Preliminaries}\label{sec:prel}
In this section, we present some properties of the minimizers $(\psi, \Ab)$ of $\GL$ in~\eqref{eq:GL}, needed in the derivation our main results.
\subsection{Classical Estimates}\label{sec:a-p-est}
Recall the magnetic field $B_0$ introduced in Assumption~\ref{assump1}.  We fix a magnetic potential $\Fb\in \Hd$ generating $B_0$  (see~\cite[Lemma A.1]{Assaad}):
\begin{equation}\label{eq:Fb}
\exists\,{\rm unique}\ \Fb\in \Hd\ \mbox{such that}\ \curl\Fb=B_0.\end{equation}
The theorem below introduces some  estimates of the critical points $(\psi, \Ab)$ of $\GL$ in~\eqref{eq:GL}, involving the field $\Fb$. In light of this theorem, $\Fb$ proves to be a useful approximation of any vector potential $\Ab$ corresponding to a minimizer of~\eqref{eq:GL}.
\begin{thm}\label{thm:priori}
	Let $\gamma\in(0,1)$ be a constant. 
	Under Assumption~\ref{assump1}, there exists a constant $C>0$  such that if
	$(\psi,\Ab)\in H^1(\Om;\C)\times \Hd$ is a weak solution of~\eqref{eq:Euler}, then 
	\begin{enumerate}
		\item $\|\psi\|_{L^\infty(\Omega)}\leq 1$,
		\item $\|(\nb-i \kp H\Ab)\psi\|_{L^2(\Om)}\leq C\kp\|\psi\|_{L^2(\Om)}$,
		\item $\displaystyle\|{\curl(\Ab-\Fb)}\|_{L^2(\Om)}\leq \frac C
		H\|\psi\|^2_{L^2(\Om)}$,
		\end{enumerate}
		and, under the additional Assumption~\ref{A_2}
	\begin{enumerate}
		\setcounter{enumi}{3}
		\item $\Ab-\Fb\in H^2(\Omega)$ and $\displaystyle\|\Ab-\Fb\|_{H^2(\Om)}\leq \frac C
		\kappa$,
		\item $\Ab-\Fb\in C^{0,\gamma}(\overline{\Om})$ and $\displaystyle\|\Ab-\Fb\|_{C^{0,\gamma}(\overline \Om)}\leq \frac C \kappa$.
	\end{enumerate}
\end{thm}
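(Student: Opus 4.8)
The plan is to prove Theorem~\ref{thm:priori} by following the standard Ginzburg--Landau bootstrap, adapted to the fact that $B_0$ is only $L^2$ (indeed piecewise constant) rather than smooth. First I would establish (1): this is the classical maximum principle argument. Testing the first GL equation in~\eqref{eq:Euler} against $(|\psi|^2-1)_+\psi$ (or applying Kato's inequality to $|\psi|$), one finds that the set $\{|\psi|>1\}$ carries no $L^2$ mass, using that the nonlinearity $\kappa^2(|\psi|^2-1)\psi$ has the favorable sign and the Neumann boundary condition $\nu\cdot(\kn)\psi=0$ kills the boundary term. Granting $\|\psi\|_\infty\le 1$, estimate (2) follows by testing the same equation against $\psi$ itself: integrating by parts (again the boundary term vanishes by the Neumann condition) gives $\|(\nb-i\kappa H\Ab)\psi\|_{L^2}^2=\kappa^2\int(|\psi|^2-1)|\psi|^2\le\kappa^2\|\psi\|_{L^2}^2$, which is even better than the claimed bound.

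Next I would handle (3), the estimate on the induced field. Set $\mathbf a=\Ab-\Fb$; then $\curl\mathbf a=\curl\Ab-B_0$ and, since both $\Ab$ and $\Fb$ lie in $\Hd$, $\mathbf a\in\Hd$ as well, so $\Div\mathbf a=0$ and $\mathbf a\cdot\nu=0$ on $\partial\Om$. The second GL equation reads $-\nb^\perp(\curl\mathbf a)=\frac{1}{\kappa H}\mathrm{Im}(\overline\psi(\nb-i\kappa H\Ab)\psi)$ in $\Om$, together with $\curl\mathbf a=0$ on $\partial\Om$ from the fourth equation. Testing against $\curl\mathbf a$ and integrating by parts, the left-hand side produces $\|\nb\curl\mathbf a\|_{L^2}^2$ (using the boundary condition), and the right-hand side is bounded by $\frac{1}{\kappa H}\|\psi\|_{L^\infty}\|(\nb-i\kappa H\Ab)\psi\|_{L^2}\|\nb\curl\mathbf a\|_{L^2}$ after one more integration by parts; combining with (1), (2) and the Poincar\'e inequality (which applies since $\curl\mathbf a$ has vanishing trace) yields $\|\curl\mathbf a\|_{L^2}\le \frac{C}{\kappa H}\cdot\kappa\|\psi\|_{L^2}\cdot\|\psi\|_{L^\infty}\le\frac{C}{H}\|\psi\|_{L^2}^2$ — wait, more carefully one extracts $\|\psi\|_{L^2}^2$ by writing the current term as $\frac1{\kappa H}\int\mathrm{Im}(\overline\psi(\nb-i\kappa H\Ab)\psi)\cdot\nb^\perp(\curl\mathbf a)$ and bounding $|(\nb-i\kappa H\Ab)\psi|$ pointwise is not available, so instead one keeps it in $L^2$ and absorbs via Cauchy--Schwarz, arriving at the stated bound with the square of the $L^2$ norm of $\psi$.

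For (4) and (5) I would invoke elliptic regularity for the div--curl system satisfied by $\mathbf a$: $\Div\mathbf a=0$, $\curl\mathbf a\in H^1(\Om)$ by the previous step, with $\mathbf a\cdot\nu=0$ and $\curl\mathbf a=0$ on the smooth boundary $\partial\Om$. This is a standard first-order elliptic boundary value problem (Friedrichs/Gaffney), giving $\|\mathbf a\|_{H^2}\le C(\|\Div\mathbf a\|_{H^1}+\|\curl\mathbf a\|_{H^1}+\|\mathbf a\|_{L^2})\le C\|\curl\mathbf a\|_{H^1}$; and $\|\curl\mathbf a\|_{H^1}\le C(\|\curl\mathbf a\|_{L^2}+\|\nb\curl\mathbf a\|_{L^2})$, both of which were controlled above by $\frac{C}{H}\|\psi\|_{L^2}^2\le\frac{C}{H}|\Om|$. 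Under Assumption~\ref{A_2} we have $H=b\kappa$, so $\|\mathbf a\|_{H^2}\le C/\kappa$. Finally, (5) follows from (4) by the Sobolev embedding $H^2(\Om)\hookrightarrow C^{0,\gamma}(\overline\Om)$ in dimension two (valid for every $\gamma\in(0,1)$), with the same constant up to a factor. The main obstacle I anticipate is that $B_0$ is not continuous, so the regularity in (4)--(5) cannot be upgraded for $\Ab$ itself, only for the \emph{difference} $\Ab-\Fb$; one must be careful that $\Fb\in\Hd$ absorbs the singular part of $B_0$ and that $\curl\mathbf a$ genuinely lands in $H^1$ rather than merely $H^1$-weak — this hinges on the right-hand side of the second GL equation being in $L^2$, which is exactly what (1)--(2) provide. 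Everything else is routine, so I would present steps (1)--(3) in full and treat (4)--(5) by citing the div--curl regularity and Sobolev embedding.
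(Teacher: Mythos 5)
The paper itself does not prove this theorem; it delegates to \cite[Theorem~4.2]{Assaad} and \cite[Proposition~10.3.1, Lemma~10.3.2]{fournais2010spectral}, and your outline for items (1), (2), (4) and (5) is essentially the argument used there: maximum principle, testing the first equation against $\psi$ (note a harmless sign slip: the identity is $\|(\nb-i\kp H\Ab)\psi\|_{L^2}^2=\kp^2\int(1-|\psi|^2)|\psi|^2$, not with $|\psi|^2-1$), then div--curl elliptic regularity for $\Ab-\Fb$ and the embedding $H^2(\Om)\hookrightarrow C^{0,\gamma}(\overline\Om)$. Your observation that only the difference $\Ab-\Fb$ can be regular, $\Fb$ absorbing the jump of $B_0$, is exactly the right point.

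There is, however, a genuine gap in your item (3). The chain you write, $\|\nabla\curl(\Ab-\Fb)\|_{L^2}\le\frac{1}{\kp H}\|\psi\|_{L^\infty}\|(\nb-i\kp H\Ab)\psi\|_{L^2}\le\frac{C}{H}\|\psi\|_{L^2}\|\psi\|_{L^\infty}$, only yields the bound $\frac{C}{H}\|\psi\|_{L^2}$ after invoking (1); the step $\|\psi\|_{L^2}\|\psi\|_{L^\infty}\le\|\psi\|_{L^2}^2$ is false (it runs the wrong way, since $\|\psi\|_{L^2}\le|\Om|^{1/2}\|\psi\|_{L^\infty}$), and the subsequent ``one keeps it in $L^2$ and absorbs via Cauchy--Schwarz'' does not explain where the second factor of $\|\psi\|_{L^2}$ comes from. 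This is not a cosmetic issue: the quadratic dependence is precisely what the paper uses at the end of the proof of Theorem~\ref{thm:main} to get $\kp^2H^2\|\curl(\Ab-\Fb)\|_{L^2}^2\le C\kp^2\|\psi\|_{L^2}^4\approx\kp^2\ell^4=o(1)$; with only the linear bound one would obtain $\kp^2\|\psi\|_{L^2}^2\approx\kp^{2-2\rho}\to\infty$. The actual derivation of the square (as in the cited references) requires an extra integration by parts moving the derivative onto the supercurrent, via the identity $\curl\,\mathrm{Im}\big(\overline\psi(\nb-i\kp H\Ab)\psi\big)=2\,\mathrm{Im}\big(\overline{(\nb-i\kp H\Ab)_1\psi}\,(\nb-i\kp H\Ab)_2\psi\big)-\kp H|\psi|^2\curl\Ab$, combined with an a priori ($L^\infty$ or $L^p$) control of $\curl(\Ab-\Fb)$ obtained from a first, cruder round of the estimate plus elliptic regularity; each term then carries $|\psi|^2$ or $|(\nb-i\kp H\Ab)\psi|^2$, which is what produces $\|\psi\|_{L^2}^2$. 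You should either carry out this bootstrap or explicitly import the statement from \cite[Theorem~4.2]{Assaad}; as written, step (3) does not close.
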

The proof of this theorem can be found in~\cite[Theorem~4.2]{Assaad} and~\cite[Proposition~10.3.1 \& Lemma~10.3.2]{fournais2010spectral}.
\subsection{Exponential decay of the order parameter}
The following theorem displays regimes for the intensity of the applied magnetic field where the order parameter and the GL energy are exponentially small away from certain intersection points of the magnetic edge and the boundary.

We suppose that the assumptions in Section~\ref{sec:setting} are satisfied. Let $j\in \{1,\cdots,n\}$. We define the set
\[S=\left\{\mathsf p_j\in\Gamma\cap\partial\Om~:~b<\big(\mu(\alpha_j,a)\big)^{-1}\right\}.\]
\begin{theorem}\label{thm:decay}
	Assume that $b$ satisfy
		\[(|a|\Theta_0)^{-1}< b<\big(\min_{j\in\{1,\cdots,n\}}\mu(\alpha_j,a)\big)^{-1}.\]
There exist constants $\kappa_0>0$, $C>0$, and $\delta_0>0$  such that, if
	\[\kappa \geq \kappa_0,\ \kappa_0 \kappa^{-1}\leq \ell<1 ,\ \mathrm{and}\ (\psi,\Ab)~\mathrm{is~a~solution~of}~\eqref{eq:Euler},\] 
	then
	\begin{equation*}
	\int_{\Omega\cap\{\mathrm{dist}(x,S)\geq \ell\}}
	\Big(|\psi|^2+ (\kappa H)^{-1}|(\nabla-i\kappa H\Ab)\psi|^2\Big)\,dx \leq C  \kappa^{-1} e^{-\delta_0\kappa\ell}.
	\end{equation*}
\end{theorem}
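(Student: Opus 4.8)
\textbf{Proof proposal for Theorem~\ref{thm:decay}.}

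The plan is to establish the estimate via an Agmon-type exponential decay argument, combining the first GL equation with the two spectral lower bounds \eqref{eq:s1}--\eqref{eq:s3} localized at the intersection points, together with the a-priori control of $\Ab-\Fb$ from Theorem~\ref{thm:priori}. First I would rescale: writing $x = \kappa^{-1} y$ near a point where the spectral analysis applies changes $(\nb - i\kappa H\Ab)$ into $\kappa(\nb_y - iH\kappa^{-1}\cdots)$; but here it is cleaner to keep the original variables and work directly with the quadratic form, tracking the parameter $b = H/\kappa$. The key point is that, away from $S$, every small ball in $\Om$ falls into one of three mutually exclusive situations: it sits in a region of \emph{constant} magnetic field away from $\Gamma$ (giving the bound $\kappa^2 H^2 |a|$ or $\kappa^2 H^2$ times $\int|u|^2$ by the uniform-field lower bound in the bulk, or $|a|\Theta_0\kappa H$ near $\partial\Om$ away from $\Gamma\cap\partial\Om$, in the rescaled picture), it meets $\Gamma$ in the interior (giving $\beta_a \kappa H \geq |a|\Theta_0\kappa H$ by the $\mathcal L_a$ lower bound of \eqref{eq:s3}), or it meets $\Gamma\cap\partial\Om$ at some $\mathsf p_j \notin S$ (giving $\mu(\alpha_j,a)\kappa H$ by \eqref{eq:mu_alfa}). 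In all cases the local ground-state constant is $\geq \big(\min_j \mu(\alpha_j,a)\big)\kappa H$ (outside the $\ell$-neighbourhood of $S$), and since $b < \big(\min_j\mu(\alpha_j,a)\big)^{-1}$ we get a spectral gap $\kappa H \min_j\mu(\alpha_j,a) - \kappa^2 = \kappa^2\big(b\min_j\mu(\alpha_j,a) - 1\big) \geq c\kappa^2$ for some fixed $c>0$.

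The main steps, in order: (1) Fix $j$ and introduce a smooth cutoff/weight $\chi_\ell$ equal to $0$ on $\{\dist(x,S)\leq \ell/2\}$ and $1$ on $\{\dist(x,S)\geq \ell\}$, with $|\nb\chi_\ell| \leq C/\ell$; more precisely use the Agmon weight $w(x) = \exp\big(\delta_0\kappa\,\dist_\varepsilon(x,S)\big)$ truncated appropriately, where $\dist_\varepsilon$ is a regularized distance. (2) Multiply the first GL equation by $\overline{w^2\psi}$, integrate by parts over $\Om$ (the boundary term vanishes by the Neumann condition $\nu\cdot(\kn)\psi=0$), and use the standard Agmon identity
\[
\int_\Om |(\kn)(w\psi)|^2 - |\nb w|^2|\psi|^2 \, dx = \re\int_\Om w^2\,\overline{\psi}\,(\kn)^2\psi\,dx = \kappa^2\int_\Om w^2(|\psi|^2-1)|\psi|^2\,dx.
\]
(3) On the region $\{\dist(x,S)\geq \ell/2\}$, replace the left-hand kinetic term by the IMS-localized lower bound from the partition-of-unity argument in Section~\ref{sec:eff} (adapted to the bounded domain $\Om$ and to the rescaled field $\kappa H B_0$), picking up $\kappa H\,(\min_j\mu(\alpha_j,a))\int w^2|\psi|^2$ minus an $O(R^{-2}\kappa H)$ localization error which is absorbed by choosing the localization scale $R \approx (\kappa H)^{-1/2}\kappa^{\eta}$ appropriately, and minus the curvature/approximation errors coming from $\|\Ab-\Fb\|_{H^2} \leq C/\kappa$ and from the smoothness of $\partial\Om$ and $\Gamma$, all of which are $o(\kappa^2)$ relative to the gap. (4) Choosing $\delta_0$ small enough that $|\nb w|^2 \leq \delta_0^2\kappa^2 w^2 \leq \tfrac{c}{2}\kappa^2 w^2$, the weight-derivative term and the $|\nb\chi_\ell|^2$ error term (the latter supported in the transition shell $\{\ell/2 \leq \dist(x,S)\leq \ell\}$ where $w \leq e^{\delta_0\kappa\ell}$ and which contributes $\leq C\ell^{-2}e^{2\delta_0\kappa\ell}\|\psi\|_{L^2}^2 \leq C\kappa^2 e^{2\delta_0\kappa\ell}$ using $\ell \geq \kappa_0\kappa^{-1}$ and $\|\psi\|_{L^2}\leq C$) are both dominated, and we obtain
\[
\tfrac{c}{2}\kappa^2 \int_{\{\dist(x,S)\geq\ell\}} w^2|\psi|^2\,dx \leq C\kappa^2 e^{2\delta_0\kappa\ell},
\]
since $w \equiv 1$ where we want the conclusion while $w \leq e^{\delta_0\kappa\ell}$ on the shell. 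Dividing by $\kappa^2$ and using $w\geq 1$ gives the $L^2$ bound on $\psi$ with the stated $e^{-\delta_0\kappa\ell}$ (after relabelling $\delta_0$), and the extra factor $\kappa^{-1}$ comes from bounding $\|\psi\|_{L^2(\text{shell})}^2$ more carefully: on the shell $\dist(x,S)\geq \ell/2 \geq \kappa_0\kappa^{-1}/2$, one can bootstrap — run the argument once with weight $1$ to get $\|\psi\|_{L^2(\{\dist \geq \kappa_0\kappa^{-1}\})}^2 = O(\kappa^{-1})$ (or cite the corresponding statement from \cite{Assaad2019,Assaad3}), then feed that into the shell error. (5) For the gradient term $(\kappa H)^{-1}\|(\kn)\psi\|^2$, return to the Agmon identity: the kinetic term with weight controls $\int w^2 |(\kn)\psi|^2$ up to $\int |\nb w|^2|\psi|^2 + \kappa^2\int w^2|\psi|^2$, both already estimated, and dividing by $\kappa H \approx \kappa^2$ yields the claimed bound.

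The main obstacle I expect is \textbf{Step (3)} — making the local spectral lower bound genuinely uniform across the whole region $\{\dist(x,S)\geq\ell\}$ in the \emph{bounded, curved} domain $\Om$ with its curved magnetic edge $\Gamma$, rather than in the model half-plane. One must cover this region by boxes of size $R$, and in each box flatten the relevant piece of $\partial\Om$ and/or $\Gamma$, approximate the magnetic potential by the model potential $\Ab_{\alpha_j,a}$ (or by a constant-field or $\mathcal L_a$-model potential away from $\Gamma\cap\partial\Om$), and control all the resulting errors: the curvature of $\partial\Om$ and $\Gamma$, the deviation $\Ab-\Fb = O(\kappa^{-1})$ in $C^{0,\gamma}$, and the fact that $\mathsf p_j \notin S$ for the boxes touching $\Gamma\cap\partial\Om$ is only a strict inequality $\mu(\alpha_j,a) > b^{-1}$, so the gap is fixed but the constants in the box-comparison must not eat it. This is the standard but delicate patching argument; one can shortcut much of it by invoking the lower bounds already established in \cite[Section~5]{Assaad3} (which proved essentially this localization with $\ell$ a fixed power of $\kappa$) and only upgrading the decay rate and the length scale $\ell$ down to $\kappa_0\kappa^{-1}$ by the Agmon weight. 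Everything else — the integration by parts, the choice of $\delta_0$, absorbing the shell error using $\ell \gtrsim \kappa^{-1}$ and $\|\psi\|_{L^2}\lesssim 1$ — is routine.
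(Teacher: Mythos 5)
Your overall strategy (Agmon weights plus local spectral lower bounds away from $S$) is the right circle of ideas, but two of your key steps are wrong as written. First, the spectral gap you invoke does not exist: under the hypothesis $b<\big(\min_j\mu(\alpha_j,a)\big)^{-1}$ we have $b\min_j\mu(\alpha_j,a)-1<0$, so your claimed gap $\kappa^2\big(b\min_j\mu(\alpha_j,a)-1\big)\geq c\kappa^2$ is negative; this hypothesis is precisely what makes $S$ nonempty, not what drives the decay. The actual gap away from $S$ comes from $b|a|\Theta_0>1$ in the regions covered by \eqref{eq:s1}--\eqref{eq:s3} (bulk, boundary away from $\Gamma\cap\partial\Om$, and interior edge), and from $b\,\mu(\alpha_j,a)\geq 1$ only at those intersection points $\mathsf p_j\notin S$. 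Second, your absorption step cannot produce the exponential factor: you place the cutoff shell at distance $\sim\ell$ from $S$, where the Agmon weight is already of size $e^{\delta_0\kappa\ell}$, so the shell error carries the same exponential factor as the weight on the target region $\{\dist(x,S)\geq\ell\}$; your displayed inequality, combined with ``$w\geq1$'', yields $\int_{\{\dist\geq\ell\}}|\psi|^2\leq Ce^{2\delta_0\kappa\ell}$, i.e.\ the exponential with the wrong sign (and if instead the weight is truncated to be $\equiv1$ on the far region, as your parenthetical suggests, there is no mechanism for decay at all). Even with your bootstrap $\|\psi\|^2_{L^2(\mathrm{shell})}=O(\kappa^{-1})$, dividing out the common factor $e^{2\delta_0\kappa\ell}$ leaves a bound with no exponential decay. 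The correct structure puts the ``near'' region at the spectral scale $(\kappa H)^{-1/2}\approx\kappa^{-1}$, where the weight is $O(1)$; the exponential factor then appears when restricting the weighted integral to $\{\dist\geq\ell\}$, and the prefactor $\kappa^{-1}$ follows from $|\psi|\leq1$ together with the area $O\big((\kappa H)^{-1}\big)$ of the near region.

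For comparison, the paper does not rerun any of this machinery: it simply quotes the weighted decay estimate of \cite[Theorem~1.6]{Assaad3}, which is exactly the inequality \eqref{eq:exp_decay} with weight $e^{\beta\sqrt{\kappa H}\,\dist(x,S)}$ and near region at scale $(\kappa H)^{-1/2}$, observes that $\kappa_0\geq1/\sqrt b$ forces $\ell\geq(\kappa H)^{-1/2}$, and reads off the theorem with $\delta_0=\beta\sqrt b$. So the shortcut you mention at the end is indeed the intended proof, but the cited result already contains the full Agmon estimate down to the scale $\kappa^{-1}$; there is nothing left to ``upgrade'', and the from-scratch argument you sketch would in any case need the two corrections above to go through.
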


\begin{proof}[Proof of Theorem~\ref{thm:decay}]
	The proof is a consequence of  the  decay estimates established in~\cite[Theorem~1.6]{Assaad3}; indeed, for $b\in \Big((|a|\Theta_0)^{-1},\big(\min_{j\in\{1,\cdots,n\}}\mu(\alpha_j,a)\big)^{-1}\Big)$,  there exist $\kappa_0,C,\beta>0$ such that, for $\kappa\geq\kappa_0$ and $H=b\kappa$,
	\begin{multline}\label{eq:exp_decay}
	\int_{\Omega\cap\{{\rm dist}(x,S)\geq \frac{1}{\sqrt{\kappa H}}\}}e^{\beta\sqrt{\kappa H}\,{\rm dist}(x,S)}
	\Big(|\psi|^2+{\frac{1}{\kp H}}|(\nabla-i\kappa H\Ab)\psi|^2\Big)\,dx\\
	\leq
	C \int_{\Omega\cap\{{\rm dist}(x,S)\leq \frac{1}{\sqrt{\kappa H}}\}.}
	|\psi|^2\,dx.\end{multline}
	 We choose $\kappa_0$ so that $\kappa_0\geq 1/\sqrt{b}$. Thus, for  $\kappa\geq\kappa_0$ and $\kappa_0 \kappa^{-1}\leq \ell<1$, we get $\ell\geq 1/\sqrt{\kappa H}$. 
	Using~\eqref{eq:exp_decay},  one can easily verify the claim of Theorem~\ref{thm:decay}, with $\delta_0=\delta_0(b)=\beta\sqrt{b}$.
\end{proof}

\section{Local estimates of minimizers (Proof of Theorem~\ref{thm:main})}\label{sec:thm}
We are still working under the assumptions in Section~\ref{sec:setting}. The aim of this section is to study the concentration of the minimizers $(\psi,\Ab)$ of the functional~\eqref{eq:GL} near the intersection points, $\mathsf p_j$, of $\partial \Om$ and $\Gamma$. This will be displayed by local estimates of the GL ground-state energy and the $L^4$-norm of  minimizers, that establish Theorem~\ref{thm:main}.

\subsection{Change of variables}\label{sec:Psi}
We will carry out the computation in adapted coordinates near  $\partial \Om \cap\Gamma$. The properties in this section are borrowed from~\cite[Section~4]{Assaad3}, where these coordinates are first defined. We present them below for the convenience of the reader, and  we refer to the aforementioned paper for more details.

 For $j  \in\{1,\cdots,n\}$, there exist $r_j>0$ and a local diffeomorphism $\Psi=\Psi_j$ of $\R^2$ satisfying the following:  
\begin{equation*}\label{eq:Psi2}
\Psi(\mathsf p_j)=(0,0)\,,\qquad |J_\Psi|(\mathsf p_j)=|J_{\Psi^{-1}}|(0,0)=1,
\end{equation*}
and there exists a neighbourhood $\mathcal U_j$ of $(0,0)$ such that 

\[\Psi\big(B(\mathsf p_j,r_j)\cap \Om_1\big) =\mathcal U_j \cap  D^{\alpha_j}_1\,,\quad  \Psi\big(B(\mathsf p_j,r_j)\cap \Om_2\big) =\mathcal U_j \cap  D^{\alpha_j}_2,\]
\[\Psi\big(B(\mathsf p_j,r_j)\cap (\partial \Om_1 \setminus \Gamma)\big) =\mathcal U_j \cap \R_+\times\{0\},\]
\[\Psi\big(B(\mathsf p_j,r_j)\cap (\partial \Om_2 \setminus \Gamma)\big) =\mathcal U_j \cap \R_-\times\{0\},\]
\[\Psi\big(B(\mathsf p_j,r_j)\cap \Gamma \big) =\mathcal U_j \cap (\hat x_2=\hat x_1\tan \alpha_j).\]
Here, $(\hat x_1,\hat x_2):=\Psi(x_1,x_2)$, and the sets $D^{\alpha_j}_1$ and $D^{\alpha_j}_2$ are defined in~\eqref{eq:A_alfa}.
We assume further that the radii $r_j$ are sufficiently small so that $\big(B(\mathsf p_j,r_j)\big)_{j\in\{1,\cdots,n\}}$ is a family of disjoint balls.	Using the properties above, one can prove the existence of a constant $C>0$, \emph{independent of $j$}, such that the Jacobians $J_\Psi$ and $J_{\Psi^{-1}}$ satisfy
\begin{equation}\label{eq:Psi3}
\big||J_\Psi(x)|-1 \big|\leq C \ell\qquad \mathrm{and}\qquad \big||J_{\Psi^{-1}}(\hat x)|-1 \big|\leq C \ell,
\end{equation}
for all $x \in B(\mathsf p_j,\ell) \subset B(\mathsf p_j,r_j)$ and $\hat x=\Psi(x)$. 
Let $\Eb=(E_1,E_2)\in H^1(\Om;\R^2)$ be such that $\curl \Eb=B$, for $B \in L^2(\R^2)$, 
and  $u \in H^1(\Om;\C)$ 
 such that $\supp u \subset B(\mathsf p_j,r_j)$. Consider the magnetic potential $\hat {\Eb}=(\hat{E}_1,\hat{E}_2) \in H^1\big(\Psi\big(B(\mathsf p_j,r_j)\big)\cap \R^2_+;\R^2\big)$  satisfying $\hat{E}_1\,d\hat{x}_1+\hat{E}_2\,d\hat{x}_2=E_1\,d x_1+E_2\,dx_2$, and the function $\hat{u}$, defined in $\Psi(B(\mathsf p_j,r_j))\cap \R^2_+$ by $\hat{u}(\hat{x})=u\big(\Psi^{-1}(\hat{x})\big)$.  Furthermore, let 
\[\hat B(\hat{x})=B\big(\Psi^{-1}(\hat{x})\big),\qquad \mbox{for all}\ \hat x \in \Psi(B(\mathsf p_j,r_j))\cap \R^2_+.\] 
One can check that 
\begin{equation*}\label{eq:curl-hat}
\curl \hat{\Eb}=\partial_{\hat{x}_1}\hat{E}_2-\partial_{\hat{x}_2}\hat{E}_1=\hat{B}J_{\Psi^{-1}},
\end{equation*}
and, for any $\mathfrak b>0$
\begin{multline}
\int_{\Om}\big|(\nabla-i\mathfrak b\Eb)u\big|^2\,dx
=\\\int_D\sum_{1\leq k,m\leq 2} G_{k,m}(\hat x)\big(\partial_{\hat x_k}-i\mathfrak b\hat E_k\big)\hat {u}(\hat {x})\overline{\big(\partial_{\hat x_m}-i\mathfrak b\hat E_m\big)\hat {u}(\hat {x})}\,|J_{\Psi^{-1}}(\hat {x})|\,d\hat {x}.\label{eq:Q_trans}
\end{multline}
Here 
$D=\Psi(B(\mathsf p_j,r_j))\cap \R^2_+$ and $G_{k,m}(\hat x)$ are the elements of the matrix $G(\hat x)=(d \Psi)(d \Psi)^t\,_{|\Psi^{-1}(\hat x)}$.
For any $\ell<r_j$, we have 
\begin{equation}\label{eq:Bon1}|G_{k,m}(\hat x)-\delta_{k,m}|\leq C\ell,\quad\ \hat x\in \Psi(B(\mathsf p_j,\ell)\big)\end{equation}
for some $C>0$ independent of $j$. 

Using the coordinates transformation above, the field $\Fb$ in~\eqref{eq:Fb} can be expressed in the following canonical manner:
\begin{lemma}\label{lem:gauge1}
	Let $a\in [-1,1)\setminus \{0\}$, and $B(0,l)\subset \Psi\big(B(\mathsf p_j,r_j)\big)$ be a ball of radius $l$.  Consider the  vector potential $\Fb\in \Hd$ satisfying $\curl \Fb=\mathbbm 1_{\Om_1}+a\mathbbm 1_{\Om_2}$.  There exists a function $\varphi_{j,l}\in H^2\big(B(0,l)\cap \R^2_+\big)$ such that the vector potential $\hat \Fb_{\rm g}:=\hat \Fb-\nabla_{\hat{x}_1,\hat{x}_2} \varphi_{j,l}$, defined in $B(0,l)\cap \R^2_+$, satisfies
	\begin{equation*}
	\big(\hat{F}_{\rm g}\big)_1=0,\quad
	\big(\hat{F}_{\rm g}\big)_2= A_{\alpha,a}+f,
	\end{equation*}
	where $A_{\alpha,a}$ is the potential introduced in~\eqref{eq:P_alfa}, $f$ is a continuous function satisfying $|f(\hat{x}_1,\hat{x}_2)|\leq C (\hat{x}^2_1+|\hat x_1\hat x_2|)$, for some $C>0$ independent of $j$.
\end{lemma}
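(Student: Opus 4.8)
The plan is to reduce the lemma to the Poincaré (de Rham) lemma on the simply connected half-disc $B(0,l)\cap\R^2_+$, after stripping off the explicit term $\Ab_{\alpha,a}=(0,A_{\alpha,a})$ and a quadratically small remainder. Write $\alpha=\alpha_j$; recall from Section~\ref{sec:new_model} that $\curl\Ab_{\alpha,a}=\mathbbm 1_{D^1_\alpha}+a\mathbbm 1_{D^2_\alpha}$. Take $l$ small enough that the chart identities above apply on $B(0,l)$, so that $\Psi$ sends $B(\mathsf p_j,r_j)\cap\Om_i$ onto $\mathcal U_j\cap D^i_\alpha$; then $B_0\circ\Psi^{-1}=\mathbbm 1_{D^1_\alpha}+a\mathbbm 1_{D^2_\alpha}=\curl\Ab_{\alpha,a}$ on $B(0,l)\cap\R^2_+$. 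Combining this with the transformation rule $\curl\hat\Fb=(B_0\circ\Psi^{-1})\,J_{\Psi^{-1}}$ from the change of variables gives
\[
\curl\hat\Fb-\curl\Ab_{\alpha,a}=\bigl(\mathbbm 1_{D^1_\alpha}+a\mathbbm 1_{D^2_\alpha}\bigr)\bigl(J_{\Psi^{-1}}-1\bigr)\qquad\text{on }B(0,l)\cap\R^2_+ ,
\]
a bounded function that vanishes at the origin.

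First I would set the remainder to be exactly the primitive that absorbs this curl mismatch:
\[
f(\hat x_1,\hat x_2)=\int_0^{\hat x_1}\bigl(\mathbbm 1_{D^1_\alpha}+a\mathbbm 1_{D^2_\alpha}\bigr)(s,\hat x_2)\,\bigl(J_{\Psi^{-1}}(s,\hat x_2)-1\bigr)\,ds ,
\]
for $(\hat x_1,\hat x_2)\in B(0,l)\cap\R^2_+$, so that $\partial_{\hat x_1}f=\bigl(\mathbbm 1_{D^1_\alpha}+a\mathbbm 1_{D^2_\alpha}\bigr)\bigl(J_{\Psi^{-1}}-1\bigr)$ a.e. Since $|a|\le1$ the step factor is bounded by $1$, and the uniform regularity of the charts (see~\eqref{eq:Psi3}) gives $|J_{\Psi^{-1}}(\hat y)-1|\le C(|\hat y_1|+|\hat y_2|)$ near $0$ with $C$ independent of $j$; integrating in $s$ then yields $|f(\hat x_1,\hat x_2)|\le C(\hat x_1^2+|\hat x_1\hat x_2|)$, and estimating the difference quotients of $f$ the same way shows $f$ is Lipschitz on $B(0,l)\cap\R^2_+$, hence $(0,f)\in H^1$. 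Now consider $\omega:=\hat\Fb-\Ab_{\alpha,a}-(0,f)$. It lies in $H^1(B(0,l)\cap\R^2_+;\R^2)$ since $\hat\Fb$, being the pull-back of $\Fb\in\Hd\subset H^1(\Om;\R^2)$ under a smooth diffeomorphism, is $H^1$, while $\Ab_{\alpha,a}\in W^{1,\infty}_{\rm loc}$ and $(0,f)\in W^{1,\infty}$; and by the display above $\curl\omega=\curl\hat\Fb-\curl\Ab_{\alpha,a}-\partial_{\hat x_1}f=0$ distributionally. As $B(0,l)\cap\R^2_+$ is simply connected, the Poincaré lemma provides $\varphi_{j,l}\in H^2(B(0,l)\cap\R^2_+)$ with $\nabla_{\hat x_1,\hat x_2}\varphi_{j,l}=\omega$. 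Then $\hat\Fb_{\rm g}:=\hat\Fb-\nabla\varphi_{j,l}=\Ab_{\alpha,a}+(0,f)$, that is $(\hat{F}_{\rm g})_1=0$ and $(\hat{F}_{\rm g})_2=A_{\alpha,a}+f$, which is the claim.

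The step that most deserves care is the combination, in this last paragraph, of the quantitative bound on $f$ with the $H^2$-regularity of $\varphi_{j,l}$ and the uniformity in $j$: the bound on $f$ and the uniformity both rest on the uniform smoothness of the charts $\Psi_j$ — as recorded in~\eqref{eq:Psi3} and~\eqref{eq:Bon1}, whose constants are $j$-independent — and in any case $j$ runs over the finite set $\{1,\dots,n\}$, so one may pass to a maximum. It is worth emphasising that $f$ does not involve $\Fb$, and only $\Fb\in H^1$ is used; since $\Fb$ is genuinely not $H^2$ across $\Gamma$, one cannot gauge $\hat\Fb$ to anything smoother than a Lipschitz field, and the point of isolating $A_{\alpha,a}$ is precisely that the residual $f$ is not merely bounded but quadratically small near $\mathsf p_j$, which is what Section~\ref{sec:thm} exploits. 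The remaining verifications — the transformation rule for $\curl$, that the pull-back is $H^1$, and the $H^2$-version of the Poincaré lemma on a half-disc — are standard and are carried out in~\cite[Proof of Lemma~4.3]{Assaad3}.
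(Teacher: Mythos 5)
Your argument is correct and is, in essence, the gauge-fixing computation that the paper itself only cites from~\cite[Proof of Lemma~4.3]{Assaad3}: compare $\curl\hat\Fb=(B_0\circ\Psi^{-1})J_{\Psi^{-1}}$ with $\curl\Ab_{\alpha,a}=\mathbbm 1_{D^1_\alpha}+a\mathbbm 1_{D^2_\alpha}$, absorb the mismatch $(\mathbbm 1_{D^1_\alpha}+a\mathbbm 1_{D^2_\alpha})(J_{\Psi^{-1}}-1)=\mathcal O(|\hat x|)$ into the primitive $f=\mathcal O(\hat x_1^2+|\hat x_1\hat x_2|)$ obtained by integrating in $\hat x_1$ from $0$, and apply the Poincar\'e lemma on the simply connected half-disc to the curl-free $H^1$ remainder. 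The quantitative bound on $f$, its Lipschitz regularity (including across the jump line $\hat x_2=\hat x_1\tan\alpha$), and the $j$-uniformity all follow, as you note, from~\eqref{eq:Psi3} and the finiteness of the index set, so nothing is missing.
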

\subsection{A useful lower bound}
We introduce the following local energies of any configuration $(\psi,\Ab)\in H^1(\Om;\C)\times \Hd$, in a domain $D\subset \Om$
\begin{align}
\mathcal E_0(\psi,\mathbf{A};D) &=\int_D\big( |(\nb-i \kp H \mathbf{A})\psi|^2-\kp^2|\psi|^2 +\frac 12 \kp^2|\psi|^4 \big)\,dx,\label{eq:local1}\\
\mathcal E(\psi,\mathbf{A};D)&=\mathcal E_0(\psi,\mathbf{A};D) +(\kp H)^2 \int_\Omega|\curl(\mathbf{A}-\mathbf{F})|^2\,dx.\label{eq:local2}
\end{align}

Let $\ell\in(0,1)$. For each $\mathsf p_j\in\Gamma\cap \partial \Om$, $j\in\{1,\cdots,n\}$, recall the set $\mathcal N_j(\ell)$ defined in~\eqref{eq:Nj} by  
\begin{equation*}
\mathcal N_j(\ell)=\{x\in \Om,\ \dist(x,\mathsf p_j)\leq\ell\}.
\end{equation*} 
In Proposition~\ref{prop:E0low} below, we establish a lower bound for the local energy of an arbitrary function $u\in H^1(\Om;\C)$ supported in a neighbourhood of $\mathsf p_j$, which will be helpful in deriving the local estimates in Theorem~\ref{thm:main}.
\begin{proposition}\label{prop:E0low}
	There exist two constants $\kappa_0>1$ and  $C>0$ such that, for $\kappa \geq \kappa_0$ and for any $\mathsf p_j\in\Gamma\cap \partial \Om$, the following is true. If 
	\begin{itemize}
		\item $(\psi,\Ab)\in H^1(\Omega;\C)\times \Hd$ is a solution of~\eqref{eq:Euler}.
		\item $u \in H^1(\Omega;\C)$ such that $\supp u\subset B(\mathsf p_j,\ell)$ and $|u| \leq 1$.
	\end{itemize}  
	then 
	\begin{equation*}
	\mathcal E_0\big(u,\Ab;\mathcal N_{j}(\ell)\big) \geq b^{-1} E_{\alpha_j,a}(b) -C(\kappa^{\frac 32}\ell^2+\kappa^{\frac 52}\ell^{\frac {10}3}+\kappa^2\ell^3+\kappa^{\frac 92}\ell^6),
	\end{equation*} 
	where $\mathcal E_0$ is the functional in~\eqref{eq:local1}, and $E_{\alpha_j,a}(b)$ is the energy in~\eqref{eq:gs-E}.
\end{proposition}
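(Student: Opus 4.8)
\textbf{Proof plan for Proposition~\ref{prop:E0low}.}

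The strategy is to transport the local energy $\mathcal E_0(u,\Ab;\mathcal N_j(\ell))$ to the half-plane model via the diffeomorphism $\Psi=\Psi_j$ of Section~\ref{sec:Psi}, compare it to the model functional $J_{b,\alpha_j,a}$ after rescaling, and absorb every geometric or gauge error into the stated power-of-$\ell$ corrections. First I would use the gauge freedom: replacing $\Ab$ by $\Fb$ costs only $\|\Ab-\Fb\|_{C^{0,\gamma}}\lesssim\kappa^{-1}$ by Theorem~\ref{thm:priori}(5), which when inserted into the quadratic form on a ball of radius $\ell$ produces an error of the form $C(\kappa^{3/2}\ell^2+\cdots)$ after the Cauchy--Schwarz splitting $|(\nb-i\kappa H\Ab)u|^2\ge(1-\epsilon)|(\nb-i\kappa H\Fb)u|^2-\epsilon^{-1}(\kappa H)^2|\Ab-\Fb|^2|u|^2$, with $\epsilon$ chosen as a small power of $\ell$ and using $|u|\le1$, $\supp u\subset B(\mathsf p_j,\ell)$, $H=b\kappa$.

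Next I would push forward to the half-plane: by the change-of-variables formula~\eqref{eq:Q_trans}, the Dirichlet term becomes $\int_D\sum_{k,m}G_{k,m}(\hat x)(\partial_{\hat x_k}-i\kappa H\hat F_k)\hat u\,\overline{(\partial_{\hat x_m}-i\kappa H\hat F_m)\hat u}\,|J_{\Psi^{-1}}|\,d\hat x$, and estimates~\eqref{eq:Psi3} and~\eqref{eq:Bon1} give $|G_{k,m}-\delta_{k,m}|\le C\ell$ and $||J_{\Psi^{-1}}|-1|\le C\ell$ on $\Psi(B(\mathsf p_j,\ell))$; similarly the $\kappa^2|u|^2$ and $\kappa^2|u|^4$ terms pick up only a multiplicative $(1+O(\ell))$. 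Then I apply Lemma~\ref{lem:gauge1} to further gauge $\hat\Fb$ so that $(\hat F_{\rm g})_1=0$, $(\hat F_{\rm g})_2=A_{\alpha_j,a}+f$ with $|f(\hat x)|\le C(\hat x_1^2+|\hat x_1\hat x_2|)\le C\ell^2$ on the $\ell$-ball; inserting this into the (already transported) quadratic form and using Cauchy--Schwarz once more against the $\kappa H$-weighted $|f|$ term gives a further error $\lesssim\kappa H\ell^2\|\hat u\|_2+(\kappa H)^2\ell^4\|\hat u\|_2^2$, which after bounding $\|\hat u\|_2^2\lesssim\ell^2$ is again of the advertised shape. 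After these reductions we have, up to the stated errors, $\mathcal E_0(u,\Ab;\mathcal N_j(\ell))\ge(1-C\ell)\int_{D}\big(|(\nb-i\kappa H\Ab_{\alpha_j,a})\hat u|^2-\kappa^2|\hat u|^2+\tfrac12\kappa^2|\hat u|^4\big)d\hat x$. Finally I perform the semiclassical rescaling $\hat x=(\kappa H)^{-1/2}y$ (equivalently $v(y)=\hat u((\kappa H)^{-1/2}y)$), which using $\curl\Ab_{\alpha_j,a}=\mathbbm1_{D^1}+a\mathbbm1_{D^2}$ is $1$-homogeneous-compatible with $\Ab_{\alpha_j,a}$: the magnetic potential rescales exactly, and one obtains $\mathcal E_0\ge(1-C\ell)\,b^{-1}\big(b\int|(\nb-i\Ab_{\alpha_j,a})v|^2-|v|^2+\tfrac12|v|^4\big)\ge(1-C\ell)b^{-1}J_{b,\alpha_j,a}(v)\ge(1-C\ell)b^{-1}E_{\alpha_j,a}(b)$, since $v\in H^1_{\Ab_{\alpha_j,a}}$ after extension by zero. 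Because $E_{\alpha_j,a}(b)\le0$ by~\eqref{eq:cond1}, the factor $(1-C\ell)$ only helps, i.e. $(1-C\ell)b^{-1}E_{\alpha_j,a}(b)\ge b^{-1}E_{\alpha_j,a}(b)+C\ell|E_{\alpha_j,a}(b)|\ge b^{-1}E_{\alpha_j,a}(b)$ up to a harmless $O(\ell)$ that is dominated by $\kappa^{3/2}\ell^2$ for $\ell\gtrsim\kappa^{-\rho}$, $\rho<1$; collecting all error terms yields the claimed bound.

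The main obstacle is bookkeeping the many small errors so that they all collapse into exactly the four terms $\kappa^{3/2}\ell^2+\kappa^{5/2}\ell^{10/3}+\kappa^2\ell^3+\kappa^{9/2}\ell^6$: the exponents $\ell^{10/3}$ and $\ell^6$ in particular signal that the optimization over the Cauchy--Schwarz splitting parameter $\epsilon$ must be done carefully (choosing $\epsilon$ a suitable power of $\ell$, so that e.g. the gauge-error term $\epsilon^{-1}(\kappa H)^2\kappa^{-2}\ell^2\cdot\ell^2$ balances against $\epsilon$ times the main quadratic term of size $\kappa^2\ell^2$), and that the quartic-in-$f$ cross terms and the curvature-times-$f$ cross terms generate the higher powers. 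A secondary subtlety is that one must not lose positivity when discarding the off-diagonal $G_{k,m}$ with $k\ne m$; here one uses that $G$ is symmetric positive definite and $(1-C\ell)$-close to the identity, so $\sum_{k,m}G_{k,m}\xi_k\bar\xi_m\ge(1-C\ell)|\xi|^2$, which is exactly the elementary linear-algebra fact needed. Everything else is a routine—if tedious—application of Theorem~\ref{thm:priori}, the change-of-variables identities of Section~\ref{sec:Psi}, Lemma~\ref{lem:gauge1}, and the definition~\eqref{eq:gs-E} of $E_{\alpha_j,a}(b)$.
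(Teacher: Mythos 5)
Your overall route---transporting the energy to the half-plane via $\Psi_j$, gauging $\hat\Fb$ to $\Ab_{\alpha_j,a}$ via Lemma~\ref{lem:gauge1}, rescaling by $\sqrt{\kappa H}$, and using $E_{\alpha_j,a}(b)\le 0$ to discard the multiplicative $(1-C\ell)$ losses---is the same as the paper's. However, your first step contains a genuine gap. You pass from $\Ab$ to $\Fb$ using only the bound $|\Ab-\Fb|\lesssim\kappa^{-1}$ inside the Cauchy--Schwarz splitting, which produces the error
\begin{equation*}
\epsilon^{-1}(\kappa H)^2\int_{B(\mathsf p_j,\ell)}|\Ab-\Fb|^2|u|^2\,dx\;\gtrsim\;\epsilon^{-1}\kappa^{2}\ell^{2},
\end{equation*}
while the companion loss from the $(1-\epsilon)$ factor acting on the $-\kappa^2|u|^2$ term is of size $\epsilon\,\kappa^{2}\ell^{2}$. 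The product of these two quantities is $\kappa^{4}\ell^{4}=\kappa^{4-4\rho}\to\infty$ for $\rho<1$, so no choice of $\epsilon$ controls both; in particular $\epsilon^{-1}\kappa^{2}\ell^{2}\ge\kappa^{2-2\rho}$ diverges and is not dominated by any of the four stated error terms. (Your heuristic ``$\epsilon^{-1}(\kappa H)^2\kappa^{-2}\ell^2\cdot\ell^2$'' silently assumes $|\Ab-\Fb|\lesssim\kappa^{-1}\ell$, a Lipschitz-type bound that item~(5) of Theorem~\ref{thm:priori} does not provide and that in any case would require centering.) The missing idea is the linear phase $\phi_j(x)=\big(\Ab(\mathsf p_j)-\Fb(\mathsf p_j)\big)\cdot x$ of~\eqref{eq:phi-j}: the gauge change $u\mapsto e^{-i\kappa H\phi_j}u$ subtracts the value of $\Ab-\Fb$ at the center $\mathsf p_j$, and the H\"older \emph{seminorm} in $\|\Ab-\Fb\|_{C^{0,\gamma}}\le C/\kappa$ then gives $|\Ab-\nabla\phi_j-\Fb|\le C\kappa^{-1}\ell^{\gamma}$ on $\mathcal N_j(\ell)$, cf.~\eqref{eq:AF}. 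With $\gamma=2/3$ and $\epsilon=\kappa^{-1/2}$ this yields exactly $\kappa^{3/2}\ell^{2}+\kappa^{5/2}\ell^{10/3}$; note that the fractional exponent $10/3=2+2\gamma$ originates in the H\"older exponent, not, as you suggest, in an optimization of $\epsilon$ over powers of $\ell$.

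The remainder of your plan does match the paper's proof: the $(1\pm C\ell)$ control of $G_{k,m}$ and $|J_{\Psi^{-1}}|$ from~\eqref{eq:Q_trans},~\eqref{eq:Psi3},~\eqref{eq:Bon1}; the treatment of the perturbation $f$ of Lemma~\ref{lem:gauge1} with $|f|\le C\ell^{2}$, which after multiplying by $\epsilon^{-1}(\kappa H)^{2}$ and $\int|u|^{2}\lesssim\ell^{2}$ gives the $\kappa^{9/2}\ell^{6}$ term; the exact $1$-homogeneous rescaling of $\Ab_{\alpha_j,a}$ leading to $b^{-1}J_{b,\alpha_j,a}(\mathrm v)\ge b^{-1}E_{\alpha_j,a}(b)$; and the observation that $(1-C\ell)E_{\alpha_j,a}(b)\ge E_{\alpha_j,a}(b)$ by~\eqref{eq:cond1}. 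Once the first step is repaired as above, the rest goes through as you describe.
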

\begin{proof}
	Let $\gamma \in (0,1)$ and $\Fb$ be the vector field introduced in~\eqref{eq:Fb}. We define the function $\phi_{j}$ by
	\begin{equation}\label{eq:phi-j}
	\phi_{j}(x)=\Big(\Ab(\mathsf p_j)-\Fb(\mathsf p_j)\Big)\cdot x.
	\end{equation} As a consequence of the fifth item in Theorem~\ref{thm:priori}, we get the following approximation of the vector potential  $\Ab$
	\begin{equation}\label{eq:AF}
	|\Ab(x)-\nabla\phi_{j}(x)-\Fb(x)| \leq \frac C\kappa \ell^\gamma\,,\qquad \mathrm{for}\ x \in \mathcal N_{j}(\ell).
	\end{equation} 
	We choose $\gamma=2/3$ in~\eqref{eq:AF}. Let $v=e^{-i\kappa H\phi_{j}}u$. Using~\eqref{eq:AF},  Cauchy's inequality, and the bound $|v|\leq 1$, we may write
	\begin{equation}\label{E_0}
	\mathcal E_0(u,\Ab;\mathcal N_{j}(\ell)) \geq (1-\kappa^{-\frac 12})\mathcal E_0(v,\Fb;\mathcal N_{j}(\ell))-C\Big(\kappa^{\frac 32}\ell^2+\kappa^{\frac 52}\ell^{\frac {10}3}\Big).
	\end{equation}
	Now, we use the change of variables introduced in Section~\ref{sec:Psi}, valid in a neighbourhood of $\mathsf p_j$, to locally send  the domain in $\Om$ onto $\R^2_+$.  $\kappa$ is assumed sufficiently large so that
	$B(\mathsf p_j,\ell) \subset B(\mathsf p_j,r_j)$.
	We associate to  $v$ the function $\hat{v}=v\circ\Psi^{-1}$, defined in $\Psi\big( B(\mathsf p_j,\ell)\big)$.
	We may use the transformation formula in~\eqref{eq:Q_trans} and the  properties in~\eqref{eq:Psi3} and~\eqref{eq:Bon1} to conclude that
	\begin{multline}\label{eq:Q_hat}
	(1-C\ell) \int_{\Psi( B(\mathsf p_j,\ell))\cap \R^2_+}\big|(\nabla-i\kappa H\hat{\Fb})\hat{v}\big|^2\,d\hat x\leq \int_{\Om}\big|(\nabla-i\kappa H\Fb)v\big|^2\,dx\\\leq (1+C\ell)\int_{\Psi( B(\mathsf p_j,\ell))\cap \R^2_+}\big|(\nabla-i\kappa H\hat{\Fb})\hat{v}\big|^2\,d\hat x,
	\end{multline}
	where $\hat \Fb$ is the transform of $\Fb$ by $\Psi$, and $C>0$ is a constant independent of $j$.
	In addition, due to the support of $v$ and~\eqref{eq:Psi3}, we note the existence of $c_1>0$  such that 
	$\Psi \big(B(\mathsf p_j,\ell)\big)\subset B(0,c_1 \ell) \subset \Psi \big(B(\mathsf p_j,r_j)\big)$, for large $\kappa$. 
	Consequently, the gauge transform in Lemma~\ref{lem:gauge1} allows us to write
	\begin{multline}\label{eq:gauge2}
	\int_{\Psi( B(\mathsf p_j,\ell))\cap \R^2_+}\big|(\nabla-i\kappa H\hat{\Fb})\hat{v}\big|^2\,d\hat x\\=\int_{\Psi( B(\mathsf p_j,\ell))\cap \R^2_+} \big|(\nabla-i\kappa H\hat{\Fb}_{\rm g})\hat{v}_{\rm g}\big|^2\,d\hat{x},
	\end{multline}
	where
	$\hat{v}_{\rm g}(\hat x)=
	\hat{v}(\hat x)e^{-i\kappa H\varphi(\hat x)}$, for $\hat x\in \Psi\big( B(\mathsf p_j,\ell)\big) \cap \R^2_+$.
	Here $\varphi=\varphi_{j,l}$, for $l=c_1\ell$, is the gauge function in Lemma~\ref{lem:gauge1}, and $\hat{\Fb}_{\rm g}$ is  the magnetic potential in the aforementioned lemma. 
	
	Recall the potential $\Ab_{\alpha,a}$ introduced in~\eqref{eq:P_alfa}. Extending $\hat{v}$ and $\hat{v}_{\rm g}$ by zero in $\R^2_+$, the Cauchy's inequality applied in~\eqref{eq:gauge2}, and the support of the function $\hat{v}_{\rm g}$ imply
	\begin{multline}\label{eq:gauge3}
	\int_{\Psi( B(\mathsf p_j,\ell))\cap \R^2_+}\big|(\nabla-i\kappa H\hat{\Fb}_{\rm g})\hat{v}_{\rm g}\big|^2\,d\hat x\geq (1-\kappa^{-\frac 12})\int_{\R^2_+} \big|(\nabla-i\kappa H\Ab_{\alpha_j,a})\hat{v}_{\rm g}\big|^2\,d\hat{x}\\-C\kappa^\frac 92\ell^4\int_{\R^2_+}|\hat{v}_{\rm g}\big|^2\,d\hat{x},
	\end{multline}
	where  $\alpha_j$ is the corresponding angle to the point $\mathsf p_j$, defined in Notation~\ref{not:alfa}. 
	But 
	\begin{equation*}
	\int_{\R^2_+}|\hat{v}_{\rm g}\big|^2\,d\hat{x}=\int_{B(\mathsf p_j,\ell)\cap\Omega} |v|^2\,|J_\Psi|\,dx.\end{equation*}
	Thus, using~\eqref{eq:Psi3} we get
	\begin{equation}\label{eq:v_v_hat} 
	(1-C\ell)\int_\Omega |v|^2\,dx\leq \int_{\R^2_+}|\hat{v}_{\rm g}\big|^2\,d\hat{x}\leq (1+C\ell)\int_\Omega |v|^2\,dx.
	\end{equation}
	Plug~\eqref{eq:v_v_hat} into~\eqref{eq:gauge3}, and use again $|v|\leq1$ together with its support to obtain
	\begin{equation}\label{eq:gauge5}
\int_{\Psi( B(\mathsf p_j,\ell))\cap \R^2_+}\big|(\nabla-i\kappa H\hat{\Fb}_{\rm g})\hat{v}_{\rm g}\big|^2\,d\hat x\geq (1-\kappa^{-\frac 12})\int_{\R^2_+} \big|(\nabla-i\kappa H\Ab_{\alpha_j,a})\hat{v}_{\rm g}\big|^2\,d\hat{x}-C\kappa^\frac 92\ell^6.
	\end{equation}
	Similarly to~\eqref{eq:v_v_hat}, we have
	\begin{equation}\label{eq:v_v_hat1}
	(1-C\ell)\int_\Omega |v|^4\,dx\leq \int_{\R^2_+}|\hat{v}_{\rm g}\big|^4\,d\hat{x}\leq (1+C\ell)\int_\Omega |v|^4\,dx.
	\end{equation}	
Consequently, using~\eqref{eq:Q_hat},~\eqref{eq:gauge2} and~\eqref{eq:v_v_hat}--\eqref{eq:v_v_hat1}, we retake the energy $\mathcal E_0(v,\Fb;\mathcal N_{j}(\ell))$ appearing  in~\eqref{E_0} and write
\begin{align}\label{E_1}
&\mathcal E_0(v,\Fb;\mathcal N_{j}(\ell))= \int_{\Om}\Big(\big|(\nabla-i\kappa H\Fb)v\big|^2-\kappa^2|v|^2+\frac {\kappa^2}2  |v|^4\Big)\,dx\nonumber\\
&\geq (1-C\ell-\kappa^{-\frac 12})\left(\int_{\R_+^2}\Big(\big|(\nabla-i\kappa H\Ab_{\alpha_j,a})\hat v_{\rm g}\big|^2 -\kappa^2 |\hat v_{\rm g}|^2+\frac {\kappa^2}2  |\hat v_{\rm g}|^4\Big)\,d\hat x\right)
-r(\kappa),
\end{align}
where $r(\kappa)=C\big(\kappa^{\frac 32}\ell^2+\kappa^{\frac 92}\ell^6+\kappa^2\ell^3\big)$.
	Next, we use the scaling $t=\sqrt{\kappa H}\hat x=\sqrt{b}\kappa \hat x$ (see Assumption~\ref{A_2}), and  define 
	\begin{equation*}\label{eq:v-scal}
	{\rm v}(t)=\hat v_{\rm g}\Big(\frac t{\sqrt{b}\kappa}\Big),\quad {\rm for}\ t\in \R_+^2.
	\end{equation*}
	One can simply check that
	\begin{equation*}
	\int_{\R_+^2}\big|(\nabla_{\hat x}-i\kappa H\Ab_{\alpha_j,a}(\hat x))\hat v_{\rm g}(\hat x)\big|^2\,d\hat x=\int_{\R_+^2}\big|(\nabla_{t}-i\Ab_{\alpha_j,a}(t)){\rm v}(t)\big|^2\,dt 
	\end{equation*}
	and
	\begin{equation*}
	\int_{\R_+^2}|\hat v_{\rm g}(\hat x)|^2\,d\hat x=\frac 1{b\kappa^2}\int_{\R_+^2}|{\rm v}(t)|^2\,dt,\quad \int_{\R_+^2}|\hat v_{\rm g}(\hat x)|^4\,d\hat x=\frac 1{b\kappa^2}\int_{\R_+^2}|{\rm v}(t)|^4\,dt.
	\end{equation*}
	Hence, 
	\begin{equation}\label{E_2}
	\int_{\R_+^2}\Big(\big|(\nabla-i\kappa H\Ab_{\alpha_j,a})\hat v_{\rm g}\big|^2 -\kappa^2 |\hat v_{\rm g}|^2+\frac {\kappa^2}2  |\hat v_{\rm g}|^4\Big)\,d\hat x=\frac 1 b J_{b,\alpha_j,a}({\rm v})\geq \frac 1b E_{\alpha_j,a}(b),
	\end{equation}
	where $J_{b,\alpha_j,a}$ and $E_{\alpha_j,a}(b)$ are  the energies in~\eqref{eq:Jb} and~\eqref{eq:gs-E} respectively.
	Having $E_{\alpha_j,a}(b)\leq 0$, we put~\eqref{E_2} in~\eqref{E_1} and get
	\begin{equation}\label{E_3}
	\mathcal E_0(v,\Fb;\mathcal N_{j}(\ell))\geq \frac 1b E_{\alpha_j,a}(b)-C\kappa^{\frac 32}\ell^2-C\kappa^{\frac 92}\ell^6-C\kappa^2\ell^3.
	\end{equation}
	Implement~\eqref{E_3} in~\eqref{E_0} to complete the proof.	 
\end{proof}
\subsection{Proof of the main result}
\begin{proof}[Proof of Theorem~\ref{thm:main}]
	Let $\gamma \in (0,1)$ and $\hat \ell=(1+\gamma)\ell$. We assume that $\kappa$ is sufficiently large so that $\mathcal N_j(\hat\ell)\cap \mathcal N_k(\hat\ell)=\emptyset$, for any $j,k\in\{1,\cdots,n\}$, $j\neq k$.
	
\emph{Let $j\in T$}. Consider a smooth function $f_j$ satisfying
	\begin{equation}\label{eq:f}
	f_j=1\ \mathrm{in}\ \mathcal N_j(\ell),\ f_j=0 \ \mathrm{in}\  \mathcal N_j\big(\hat\ell\big)^\complement,
	0\leq f_j \leq 1\ \mathrm{and}\  |\nabla f_j| \leq C\gamma^{-1} \ell^{-1}\ \mathrm{in}\ \Omega.
	\end{equation}
	Here, $ \mathcal N_j\big(\hat\ell\big)^\complement$ denotes the complement of $ \mathcal N_j\big(\hat\ell\big)$ in $\Om$. We have the following  identity
	\begin{multline}\label{eq:Delta}
	\int_{\mathcal N_j(\hat{\ell})}\big|(\nb-i \kp H \Ab)f_j\psi\big|^2\,dx =\int_{\mathcal N_j(\hat{\ell})}\big|f_j(\nb-i \kp H \Ab)\psi\big|^2\,dx +\int_{\mathcal N_j(\hat{\ell})}|\nabla f_j|^2\psi|^2\,dx\\+2\re \int_{\mathcal N_j(\hat{\ell})}f_j(\nb-i \kp H \Ab)\psi\cdot\psi\nabla f_j\,dx.
	\end{multline}
	Consider the following obvious decompositions
	\begin{align}\label{eq:f2}
	\int_{\mathcal N_j(\hat{\ell})}f_j^2|\psi|^2\,dx&=\int_{\mathcal N_j(\hat{\ell})}|\psi|^2\,dx+\int_{\mathcal N_j(\hat{\ell})}(f_j^2-1)|\psi|^2\,dx\nonumber \\
	\int_{\mathcal N_j(\hat{\ell})}f_j^4|\psi|^4\,dx&=\int_{\mathcal N_j(\hat{\ell})}|\psi|^4\,dx+\int_{\mathcal N_j(\hat{\ell})}(f_j^4-1)|\psi|^4\,dx,
	\end{align}
	and that of $ \int_{\mathcal N_j(\hat{\ell})}\big|f_j(\nb-i \kp H \Ab)\psi\big|^2\,dx$ into
	\[\int_{\mathcal N_j(\hat{\ell})}\big|(\nb-i \kp H \Ab)\psi\big|^2\,dx+\int_{\mathcal N_j(\hat{\ell})}(f_j^2-1)\big|(\nb-i \kp H \Ab)\psi\big|^2\,dx.\] Moreover, note that 
	\begin{equation}\label{eq:f1}
	\Big|\re\int_{\mathcal N_j(\hat{\ell})}f_j(\nb-i \kp H \Ab)\psi\cdot\psi\nabla f_j\,dx\Big|\leq \|f_j(\nb-i \kp H \Ab)\psi\|_{L^2(\mathcal N_j(\hat{\ell}))}\|\psi\nabla f_j\|_{L^2(\mathcal N_j(\hat{\ell})),}\end{equation}
	and recall that $\ell\approx\kappa^{-\rho}$ is chosen so that $\ell\gg\kappa^{-1}$. Hence, using~\eqref{eq:Delta}--\eqref{eq:f1} and the properties of $f_j$ in~\eqref{eq:f}, particularly that $\nabla f_j$, $f_j^2-1$ and $f_j^4-1$ are supported in $\mathcal  N_j({\ell})^\complement$, together with  Theorem~\ref{thm:decay} which ensures that $|\psi|$ is exponentially small in $\mathcal  N_j(\hat{\ell})\setminus \mathcal N_j({\ell})$, we get
	\begin{equation}\label{eq:f3}
\mathcal E_0(f_j\psi,\Ab;\mathcal N_j(\hat{\ell}))=\mathcal E_0(\psi,\Ab;\mathcal N_j(\hat{\ell}))+o(1).
	\end{equation}
	\paragraph{\emph {A lower bound of the local energy}}
	Notice that the function $f_j\psi$ satisfies the conditions in Proposition~\ref{prop:E0low}, with $\hat\ell$ replacing $\ell$, 
	then using this proposition and~\eqref{eq:f3}, we get (with the choice of $\rho\in(4/5,1)$ in $\ell\approx\kappa^{-\rho}$)
	\begin{equation}\label{eq:f4}
	\mathcal E_0(\psi,\Ab;\mathcal N_j(\hat{\ell}))\geq\frac 1 b E_{\alpha_j,a}(b)+o(1).
	\end{equation}
	\paragraph{\emph {An upper bound of the local energy}}
	Inspired by~\cite{sandier2003decrease,helffer2017decay,helffer2018density,Assaad2019}, we define the following test function
	\begin{equation*}\label{eq:test}
	w(x)=\mathbbm{1}_{\mathcal N_j(\hat\ell)}(x) f_j(x)e^{i\kappa H(\phi_j(x)+\varphi_{j,\hat \ell}(\Psi_j(x))}u_j(\sqrt{\kappa H}\Psi_j(x))+(1-f_j(x))\psi(x)\,,
	\end{equation*}	
	where $\Psi_j$ is the coordinate transformation in Section~\ref{sec:Psi}, $f_j$, $\phi_j$, and $\varphi_{j,\hat \ell}$ are respectively the functions in~\eqref{eq:f},~\eqref{eq:phi-j}, and Lemma~\ref{lem:gauge1},  and $u_j=u_{b,\alpha_j,a}$ is a minimizer of the functional in~\eqref{eq:Jb} with $b=H/\kappa$.
	
	A minimizer $(\psi,\Ab)$ of~\eqref{eq:GL} obviously satisfies \[\mathcal E_{\kp,H}(\psi,\Ab)\leq \mathcal E_{\kp,H}(w,\Ab).\] Suppressing the term $\kp^2H^2\|\curl(\Ab-\Fb)\|_{L^2(\Om)}^2$  from the above expression, we get
\[\mathcal E_0(\psi,\Ab;\Om)\leq \mathcal E_0(w,\Ab;\Om).\]	
Note that $f_j=0$ in ${\mathcal N_j(\hat\ell)}^\complement$, hence using the following decompositions
\begin{align*}
\mathcal E_0(\psi,\Ab;\Om)&= \mathcal E_0(\psi,\Ab;\mathcal N_j(\hat\ell))+\mathcal E_0(\psi,\Ab;{\mathcal N_j(\hat\ell)}^\complement)\\
\mathcal E_0(w,\Ab;\Om)&= \mathcal E_0(w,\Ab;\mathcal N_j(\hat\ell))+\mathcal E_0(\psi,\Ab;{\mathcal N_j(\hat\ell)}^\complement)
\end{align*}
we further get
\begin{equation}\label{eq:f5}
\mathcal E_0(\psi,\Ab;\mathcal N_j(\hat\ell))\leq\mathcal E_0(w,\Ab;\mathcal N_j(\hat\ell)).
\end{equation}	
On the other hand, the decay estimates in~\eqref{eq:u-alfa1} and Theorem~\ref{thm:decay} assure that $u_j(\sqrt{\kappa H}\Psi_j(x))$ and $|\psi(x)|$ are exponentially small in  $\mathcal N_j(\hat\ell)\setminus\mathcal N_j(\ell)$ (having $\ell\gg\kappa^{-1}$). Hence, a  computation\footnote{we omit the computation details which  is now straightforward, after having the proofs in the above paragraph and Proposition~\ref{prop:E0low}.} quite similar to the one done  in the lower bound proof above and in Proposition~\ref{prop:E0low} yields
\begin{equation}\label{eq:f6}
\mathcal E_0(w,\Ab;\mathcal N_j(\hat\ell))=\frac 1 b E_{\alpha_j,a}(b)+o(1).
\end{equation}
Put~\eqref{eq:f6} in~\eqref{eq:f5} to get the following  upper bound 
\begin{equation}\label{eq:f7}
\mathcal E_0(\psi,\Ab;\mathcal N_j(\hat\ell))\leq\frac 1 b E_{\alpha_j,a}(b)+o(1).
\end{equation}
\paragraph{\emph {The order parameter estimates}}
Integrating by parts in the first equation of~\eqref{eq:Euler}, we get (see~\cite[(6.2)]{fournais2011nucleation})
\begin{equation*}\label{eq:f_psi}
\int_{\mathcal N_j(\hat{\ell})}\left(\big|(\nb-i \kp H \Ab)f_j\psi\big|^2 -|\nabla f_j|^2|\psi|^2 \right)\,dx=\kappa^2\int_{\mathcal N_j(\hat{\ell})}\left(|\psi|^2-|\psi|^4\right)f_j^2\,dx\,.
\end{equation*}
Consequently,
\begin{align*}\label{eq:f_psi_1}
\mathcal E_0\big(f_j\psi,\Ab;\mathcal N_j(\hat{\ell})\big)&=\kappa^2\int_{\mathcal N_j(\hat{\ell})}f_j^2\Big(-1+\frac 12 f_j^2\Big)|\psi|^4\,dx+\int_{\mathcal N_j(\hat{\ell})}|\nabla f_j|^2|\psi|^2\,dx\nonumber\\
&=-\frac 12\kappa^2\int_{\mathcal N_j({\ell})}|\psi|^4\,dx+\kappa^2\int_{\mathcal N_j(\hat{\ell})\setminus \mathcal N_j({\ell})}f_j^2\Big(-1+\frac 12 f_j^2\Big)|\psi|^4\,dx\nonumber\\&\quad+\int_{\mathcal N_j(\hat{\ell})}|\nabla f_j|^2|\psi|^2\,dx.
\end{align*}
Again, using the exponential decay of $\psi$ in $\mathcal N_j(\hat{\ell})\setminus \mathcal N_j({\ell})$, and the properties of $f_j$ in~\eqref{eq:f}, we get 
\[\mathcal E_0\big(f_j\psi,\Ab;\mathcal N_j(\hat{\ell})\big)=-\frac 12\kappa^2\int_{\mathcal N_j({\ell})}|\psi|^4\,dx+o(1).\]
Implement this equation in~\eqref{eq:f3} and use the bounds in~\eqref{eq:f4} and~\eqref{eq:f7} to get the estimates of the $L^4$-norm of $\psi$ in~\eqref{eq:m2}.
\paragraph{\emph{The global energy estimates}} Now, the  estimates of the ground-state energy in~\eqref{eq:m3} are easy to establish.
First notice that 
\begin{align*}
\Es&= \mathcal E_0(\psi,\Ab;\Om)+\kp^2H^2\|\curl(\Ab-\Fb)\|_{L^2(\Om)}^2\nonumber\\
&=\mathcal E_0\big(\psi,\Ab;\mathcal S_{\hat\ell}\big)+\kp^2H^2\|\curl(\Ab-\Fb)\|_{L^2(\Om)}^2+o(1),\nonumber\\
&=\frac 1 b \sum_{j\in T}E_{\alpha_j,a}(b)+\kp^2H^2\|\curl(\Ab-\Fb)\|_{L^2(\Om)}^2+o(1),
\end{align*}
where $\mathcal S_{\hat\ell}:=\bigcup_{j\in T}\mathcal N_j(\hat{\ell})$. The equality above is obtained by summing over $T$ in~\eqref{eq:f4} and~\eqref{eq:f7}, and due to the aforementioned decay of the minimizer and its energy in $\mathcal S_{\hat\ell}^\complement$. Moreover, this decay and  Item~3 in Theorem~\ref{thm:priori} assure that
\begin{equation*}
\kp^2H^2\|\curl(\Ab-\Fb)\|_{L^2(\Om)}^2\leq C\kappa^2\|\psi\|_{L^2(\Om)}^4 \leq \tilde C\kappa^2\ell^4=o(1).
\end{equation*}
This completes the proof of~\eqref{eq:m3}.
\end{proof}
 \section*{Acknowledgements} I would like to thank Ayman Kachmar for his valuable comments about this article.


\begin{thebibliography}{AKPS19}
	
	\bibitem[AK16]{Assaad}
	W.~Assaad and A.~Kachmar.
	\newblock The influence of magnetic steps on bulk superconductivity.
	\newblock {\em Discrete Contin. Dyn. Syst. Ser. A}, 36:6623--6643, 2016.
	
	\bibitem[AKPS19]{Assaad2019}
	W.~Assaad, A.~Kachmar, and M.~Persson-Sundqvist.
	\newblock The distribution of superconductivity near a magnetic barrier.
	\newblock {\em Comm. Math. Phys.}, 366(1):269--332, 2019.
	
	\bibitem[AKS20]{assaad2020p4}
	W.~Assaad, A.~Kashmar, and L.~Sabbagh.
	\newblock Non-homogeneous magnetic permeability and magnetic steps within the
	{G}inzburg–{L}andau model (submitted).
	\newblock 2020.
	
	\bibitem[Ass20]{Assaad3}
	W.~Assaad.
	\newblock The breakdown of superconductivity in the presence of magnetic steps.
	\newblock {\em Commun. Contemp. Math.,
		https://doi.org/10.1142/S0219199720500054}, 2020.
	
	\bibitem[Att15a]{attar2015energy}
	K.~Attar.
	\newblock Energy and vorticity of the {Ginzburg--Landau} model with variable
	magnetic field.
	\newblock {\em Asymptot. Anal.}, 93(1-2):75--114, 2015.
	
	\bibitem[Att15b]{attar2015ground}
	K.~Attar.
	\newblock The ground state energy of the two dimensional {Ginzburg--Landau}
	functional with variable magnetic field.
	\newblock {\em Ann. Inst. H. Poincar{\'e} Anal. Non Lin{\'e}aire},
	32(2):325--345, 2015.
	
	\bibitem[BND06]{bonnaillie2006asymptotics}
	V.~Bonnaillie-No{\"e}l and M.~Dauge.
	\newblock Asymptotics for the low-lying eigenstates of the {S}chr{\"o}dinger
	operator with magnetic field near corners.
	\newblock {\em Ann. Henri Poincar{\'e}}, 7(5):899--931, 2006.
	
	\bibitem[BNF07]{bonnaillie2007superconductivity}
	V.~Bonnaillie-No{\"e}l and S.~Fournais.
	\newblock Superconductivity in domains with corners.
	\newblock {\em Rev. Math. Phys.}, 19(06):607--637, 2007.
	
	\bibitem[Bon05]{bonnaillie2005fundamental}
	V.~Bonnaillie.
	\newblock On the fundamental state energy for a {S}chr{\"o}dinger operator with
	magnetic field in domains with corners.
	\newblock {\em Asymptot. Anal.}, 41(3-4):215--258, 2005.
	
	\bibitem[CDR17]{correggi2017universal}
	M.~Correggi, B.~Devanarayanan, and N.~Rougerie.
	\newblock Universal and shape dependent features of surface superconductivity.
	\newblock {\em Eur. Phys. J. B}, 90(11):231, 2017.
	
	\bibitem[CFKS09]{cycon2009schrodinger}
	H.L. Cycon, R.G. Froese, W.~Kirsch, and B.~Simon.
	\newblock {\em Schr{\"o}dinger {O}perators: {With} {A}pplication to {Q}uantum
		{M}echanics and {G}lobal {G}eometry}.
	\newblock Springer, 2009.
	
	\bibitem[CG17]{correggi2017surface}
	M.~Correggi and E.~L. Giacomelli.
	\newblock Surface superconductivity in presence of corners.
	\newblock {\em Rev. Math. Phys.}, 29(2):1750005, 2017.
	
	\bibitem[CG19]{correggi2019effects}
	M.~Correggi and E.L. Giacomelli.
	\newblock Effects of corners in surface superconductivity.
	\newblock {\em arXiv:1908.10112}, 2019.
	
	\bibitem[CR14]{Correggi}
	M.~Correggi and N.~Rougerie.
	\newblock On the {Ginzburg--Landau} functional in the surface superconductivity
	regime.
	\newblock {\em Commun. Math. Phys.}, 332(3):1297--1343, 2014.
	
	\bibitem[CR16a]{correggi2016boundary}
	M.~Correggi and N.~Rougerie.
	\newblock Boundary behavior of the {G}inzburg--{L}andau order parameter in the
	surface superconductivity regime.
	\newblock {\em Arch. Ration. Mech. Anal.}, 219(1):553--606, 2016.
	
	\bibitem[CR16b]{correggi2016effects}
	M.~Correggi and N.~Rougerie.
	\newblock Effects of boundary curvature on surface superconductivity.
	\newblock {\em Lett. Math. Phys.}, 106(4):445--467, 2016.
	
	\bibitem[DMR18]{dauge2018semiclassical}
	M.~Dauge, J.P. Miqueu, and N.~Raymond.
	\newblock On the semiclassical laplacian with magnetic field having
	self-intersecting zero set.
	\newblock {\em arXiv:1807.09028}, 2018.
	
	\bibitem[Erd97]{erdHos1997dia}
	L{\'a}szl{\'o} Erd{\H{o}}s.
	\newblock Dia-and paramagnetism for nonhomogeneous magnetic fields.
	\newblock {\em J. Math. Phys.}, 38(3):1289--1317, 1997.
	
	\bibitem[FH10]{fournais2010spectral}
	S.~Fournais and B.~Helffer.
	\newblock {\em {Spectral} {Methods} in {Surface} {Superconductivity}},
	volume~77.
	\newblock Springer Science \& Business Media, 2010.
	
	\bibitem[FK11]{fournais2011nucleation}
	S.~Fournais and A.~Kachmar.
	\newblock Nucleation of bulk superconductivity close to critical magnetic
	field.
	\newblock {\em Adv. Math.}, 226(2):1213--1258, 2011.
	
	\bibitem[FMP19]{fournais2019concentration}
	S.~Fournais, J.P. Miqueu, and X.B. Pan.
	\newblock Concentration behavior and lattice structure of 3{D} surface
	superconductivity in the half space.
	\newblock {\em Math. Phys. Anal. Geom.}, 22(2):12, 2019.
	
	\bibitem[FPS15]{fournais2015lack}
	S.~Fournais and M.~Persson-Sundqvist.
	\newblock Lack of diamagnetism and the {L}ittle--{P}arks effect.
	\newblock {\em Commun. Math. Phys.}, 337(1):191--224, 2015.
	
	\bibitem[GP99]{giorgi2002breakdown}
	T.~Giorgi and D.~Phillips.
	\newblock The breakdown of superconductivity due to strong fields for the
	{Ginzburg--Landau} model.
	\newblock {\em SIAM J. Math. Anal.}, 30:341--359, 1999.
	
	\bibitem[HK15]{Helffer}
	B.~Helffer and A.~Kachmar.
	\newblock The {Ginzburg--Landau} functional with vanishing magnetic field.
	\newblock {\em Arch. Ration. Mech. Anal.}, 218(1):55--122, 2015.
	
	\bibitem[HK17]{helffer2017decay}
	B.~Helffer and A.~Kachmar.
	\newblock Decay of superconductivity away from the magnetic zero set.
	\newblock {\em Calc. Var. Partial Differ. Equ.}, 56(5):130, 2017.
	
	\bibitem[HK18]{helffer2018density}
	B.~Helffer and A.~Kachmar.
	\newblock The density of superconductivity in domains with corners.
	\newblock {\em Lett. Math. Phys.}, 108(9):2169--2187, 2018.
	
	\bibitem[HK19]{helffer2019thin}
	B.~Helffer and A.~Kachmar.
	\newblock Thin domain limit and counterexamples to strong diamagnetism.
	\newblock {\em arXiv:1905.06152}, 2019.
	
	\bibitem[HM01]{helffer2001magnetic}
	B.~Helffer and A.~Morame.
	\newblock Magnetic bottles in connection with superconductivity.
	\newblock {\em J. Funct. Anal.}, 185(2):604--680, 2001.
	
	\bibitem[HP03]{helffer2003upper}
	B.~Helffer and X.B. Pan.
	\newblock Upper critical field and location of surface nucleation of
	superconductivity.
	\newblock {\em Ann. Inst. H.Poincar{\'e} Anal. Non Lin{\'e}aire},
	20(1):145--181, 2003.
	
	\bibitem[Jad01]{jadallah2001onset}
	H.T. Jadallah.
	\newblock The onset of superconductivity in a domain with a corner.
	\newblock {\em J. Math. Phys.}, 42(9):4101--4121, 2001.
	
	\bibitem[KP19]{kachmar2019superconductivity}
	A.~Kachmar and X.B. Pan.
	\newblock Superconductivity and the {A}haronov--{B}ohm effect.
	\newblock {\em Comptes Rendus Mathematique}, 357(2):216--220, 2019.
	
	\bibitem[KPS19]{kachmar2019counterexample}
	A.~Kachmar and M.~Persson-Sundqvist.
	\newblock Counterexample to strong diamagnetism for the magnetic {R}obin
	{L}aplacian.
	\newblock {\em arXiv:1910.12499}, 2019.
	
	\bibitem[LP62]{little1962observation}
	W.A. Little and R.D. Parks.
	\newblock Observation of quantum periodicity in the transition temperature of a
	superconducting cylinder.
	\newblock {\em Phys. Rev. Lett.}, 9(1):9--12, 1962.
	
	\bibitem[LP99]{lu1999estimates}
	K.~Lu and X.B. Pan.
	\newblock Estimates of the upper critical field for the {Ginzburg--Landau}
	equations of superconductivity.
	\newblock {\em Physica D}, 127(1):73--104, 1999.
	
	\bibitem[LP00]{lu2000gauge}
	K.~Lu and X.B. Pan.
	\newblock Gauge invariant eigenvalue problems in {${\mathbb R}^2$} and
	{${\mathbb R}^2_+$}.
	\newblock {\em Trans. Amer. Math. Soc.}, 352(3):1247--1276, 2000.
	
	\bibitem[Pan02]{pan2002upper}
	X.B. Pan.
	\newblock Upper critical field for superconductors with edges and corners.
	\newblock {\em Calc. Var. Partial Differ. Equ.}, 14(4):447--482, 2002.
	
	\bibitem[PK02]{pan2002schrodinger}
	X.B. Pan and K.H. Kwek.
	\newblock Schr{\"o}dinger operators with non-degenerately vanishing magnetic
	fields in bounded domains.
	\newblock {\em Trans. Amer. Math. Soc.}, 354(10):4201--4227, 2002.
	
	\bibitem[Ray09]{raymond2009sharp}
	N.~Raymond.
	\newblock Sharp asymptotics for the {N}eumann {L}aplacian with variable
	magnetic field: case of dimension 2.
	\newblock {\em Ann. Henri Poincar{\'e}}, 10(1):95--122, 2009.
	
	\bibitem[SJG63]{saint1963onset}
	D.~Saint-James and P.G.~de Gennes.
	\newblock Onset of superconductivity in decreasing fields.
	\newblock {\em Phys. Lett.}, 7(5):306--308, 1963.
	
	\bibitem[SS03]{sandier2003decrease}
	E.~Sandier and S.~Serfaty.
	\newblock The decrease of bulk superconductivity close to the second critical
	field in the {Ginzburg--Landau} model.
	\newblock {\em SIAM J. Math. Anal.}, 34(4):939--956, 2003.
	
	\bibitem[SS07]{sandier2007vortices}
	E.~Sandier and S.~Serfaty.
	\newblock {\em {V}ortices in the {M}agnetic {Ginzburg--Landau} {M}odel}.
	\newblock {P}rogress in {N}onlinear {Partial} {Differential} {Equations} and
	their {A}pplications, Birkh{\"a}user-Boston, 2007.
\end{thebibliography}
\end{document}